% 4.0: Sanjay, Baseline from Sir, modifications for images required
% 5.0: Sanjay, Modifications to references, corrections

\documentclass{article}

\usepackage{arxiv}

\usepackage[utf8]{inputenc} % allow utf-8 input
\usepackage[T1]{fontenc}    % use 8-bit T1 fonts
\usepackage{hyperref}       % hyperlinks
\usepackage{url}            % simple URL typesetting
\usepackage{booktabs}       % professional-quality tables
\usepackage{amsfonts}       % blackboard math symbols
\usepackage{nicefrac}       % compact symbols for 1/2, etc.
\usepackage{microtype}      % microtypography
\usepackage{lipsum}
\usepackage{times}
\usepackage{epsfig}
\usepackage{graphicx}
\usepackage{amsmath}
\usepackage{amsthm}
\usepackage{amssymb}
\usepackage{bm}
\usepackage{nccmath}
\usepackage{multirow}
\usepackage{url}
\usepackage{microtype}
\usepackage{algorithm2e}
\usepackage{array}
\usepackage{relsize}
\usepackage{cite}
\usepackage{hyperref}
\usepackage{subcaption}
\usepackage{makecell}
\usepackage{color,soul}

\def \g{\mathbf{g}}

\def \r{\mathbf{r}}
\def \t{\mathbf{t}}
\def \h{\mathbf{h}}
\newcommand{\pwn}[1]{\|{#1}\|_{\cdot 2}}
\newcommand{\innerprod}[2]{\langle #1, #2\rangle}

\def \xh{\hat{x}}
\def \yh{\hat{\mathbf{y}}}
\def \zh {\hat{\mathbf{z}}}

\def \y {\mathbf{y}}
\def \z {\mathbf{z}}
\def \x {\mathbf{x}}

\newcommand{\genm}[3]{
	\left[\begin{array}{cc}
		#1  &  #3 \\
		#3 & #2 
	\end{array}
	\right]
}

\newtheorem{prop}{\bf Proposition}
\setlength{\tabcolsep}{2pt}
\graphicspath{{Figures/}}

\title{Structurally  Adaptive  Multi-Derivative  Regularization
	for Image Recovery from Sparse Fourier Samples}

\author{  
		Sanjay Viswanath, \;\; Manu Ghulyani, and  \;\; 
Muthuvel Arigovindan\thanks{Corresponding author (mvel@iisc.ac.in)} \\
		Imaging Systems Lab\\  
		Department of Electrical Engineering\\
		Indian Institute of Science (IISc)\\ Bangalore 560012, India\\
				}

\begin{document}
	\maketitle
	
	\begin{abstract}
		The importance  of regularization   has been well  established  in 
		image 
		reconstruction---which is the computational inversion of imaging 
		forward 
		model--with applications including deconvolution for microscopy,  
		tomographic
		reconstruction,  magnetic
		resonance imaging, and so on.  Originally, the primary role of the 
		regularization was to 
		stabilize  the computational inversion of the imaging forward model 
		against 
		noise.  However, 
		a recent  framework pioneered by Donoho and others, known as 
		compressive 
		sensing, brought 
		the role of regularization beyond the stabilization of inversion.  It 
		established a possibility
		that regularization can recover full images 
		from  
		highly 
		undersampled measurements.
		However, it was observed that the quality of reconstruction 
		yielded by compressive
		sensing methods falls abruptly when the under-sampling and/or  
		measurement 
		noise goes
		beyond a certain threshold.  Recently developed learning-based methods  
		are 
		believed to
		outperform the compressive sensing methods without a steep drop in 
		the 
		reconstruction 
		quality  under such imaging conditions.  However,
		the need for training data limits their applicability.  In this paper, 
		we 
		develop a regularization
		method that outperforms compressive sensing methods as well as 
		selected  
		learning-based methods,
		without any need for training data.  The regularization is constructed 
		as a spatially varying
		weighted sum  of   first- and  canonical second-order derivatives, with 
		the 
		weights determined
		to be adaptive to the image structure;  the weights are determined  
		such that  the attenuation of sharp 
		image 
		features---which is inevitable with the use of any regularization---is 
		significantly reduced.
		We demonstrate the effectiveness of the proposed method by performing 
		reconstruction
		on sparse Fourier samples simulated from a variety of MRI images.
	\end{abstract}

	% keywords can be removed
	\keywords{Adaptive Regularization, Total Variation, Magnetic Resonance 
	Imaging, 
		Image 
		Reconstruction}
	
	\section{Introduction}
	
	Magnetic resonance imaging (MRI) is one of the important medical 
	imaging modalities  in modern medicine for both clinical practice and 
		medical 
		research.  Its importance stems
	from the fact that it is absolutely non-invasive.  An MRI system measures 
	samples of the Fourier transform of a
	cross-sectional  plane of interest \cite{MRI_Haacke1999magnetic, 
		MRI_Pipe1999sampling}.  The measured samples in the Fourier plane 
		typically lie on 
	a trajectory 
	that is determined by the physical constraints of the imaging system.   
	In most systems,  the measured samples are transformed  into a set of
	samples on a cartesian grid via interpolation, such that the required real 
	space 
	image is obtained by simple
	inverse discrete  Fourier transform \cite{MRI_Haacke1999magnetic, 
		MRI_Pipe1999sampling}.   However, this type of image reconstruction 
		method 
	leads to serious artifacts,
	when the measured  Fourier samples are not sufficiently dense.  A dense 
	sampling
	is impractical in cases 
	where it is required to reduce the scan time 
	\cite{MRI_Sparse_Lustig2007sparse}.
	In such cases,  the method of compressive sensing,  pioneered by Donoho and 
	others \cite{CS_donoho2006compressed,CS_candes2006compressive, 
		CS_candes2006near, 
		CS_tsaig2006extensions, CS_candes2006quantitative, 
		CS_baraniuk2007compressive, 
		CS_elad2007optimized, CS_peyre2010best,
		CS_tropp2010computational, CS_duarte2011structured}, 
	becomes  a better 
	alternative 
	\cite{ 
		CS_candes2006robust, MRI_Sparse_Lustig2007sparse, 
		CS_MRI_trzasko2008highly, 
		CS_MRI_ma2008efficient, 
		MRI_CS_lustig2008compressed, MRI_Bresler, CS_MRI_patel2011gradient, 
		CS_MRI_tamada2014two}, as it can reconstruct good quality images from
	sparse samples.   Unlike direct inversion, the method of compressive 
	sensing formulates the reconstruction problem as an inverse problem and 
	solves
	it by employing  numerical optimization as given in 
	Figure \ref{fig:basic}.  In this scheme,  a cost function 
	composed of the following parts is 
	constructed:   
	\begin{itemize}
		\setlength\itemsep{-0.2em}
		\vspace{-0.05cm}
		\item
		A data error term that measures the goodness of fit of the candidate 
		image 
		to the measured
		data through the imaging forward model, which can be expressed as
		\begin{equation}
			\label{eq:dataerr}
			F(s,{\bf m}, T) = \| Ts-{\bf m} \|_2^2,
		\end{equation}
		where $T$ represents a  sampling operator in the Fourier domain that 
		returns a  vector of 
		complex Fourier samples from
		the candidate image $s$,  ${\bf m}$ denotes the vector of measured 
		samples,    and 
		$\|\cdot\|_2$ represents the
		magnitude of its vector/matrix argument
		which is the  square root of the sum of squares of  its elements.
		\item
		An image  regularization  cost $R(s)$,  that measures the overall 
		roughness 
		of the 
		candidate image  through  image derivatives.  The most commonly used 
		forms   for
		$R(s)$ are given below \cite{TV1_Rudin, Scherzer_tv2_98, HS_13}:
		\begin{align}
			\label{eq:regtv1}
			R_1(s) & = \sum_{\bf r} \| \nabla s(\r) \|_2, \\
			\label{eq:regtv2}
			R_2(s) & = \sum_\r \| \nabla^2 s(\r) \|_F,  \\
			\label{eq:reghs}
			R_{hs}(s) & = \sum_\r \| \nabla^2 s(\r) \|_{{\cal S}(1)},
		\end{align}
		where  $\nabla$  and $\nabla^2$ are gradient and Hessian operators.  
		The 
		symbols 
		$\| \cdot \|_2$  and  $\| \cdot \|_F$ denotes vector and matrix norms 
		that 
		are
		computed as the square root of the sum of squares of the elements.
		The symbol $\| \cdot \|_{{\cal S}(1)}$
		denotes the Nuclear-norm, which is the sum of  modulus of Eigenvalues 
		of 
		its matrix argument.
		$R_{hs}(s)$ has proven to be one of the best performing 
		regularizations. 
	\end{itemize}
	The numerical  optimization program evaluates  a  weighted sum of the form
	$J(s,{\bf m}, T, \lambda) = F(s,{\bf m},T) + \lambda R(s)$  where $\lambda$ 
	is 
	the user-defined
	weight determining the trade-off between the data error and the image 
	roughness.
	This cost is evaluated along with its gradient for a series of candidates 
	images  
	(varying $s(\r)$), 
	and determines the minimum, 
	which becomes the required reconstructed  image   $s^*(\r)$.  We will denote
	this operation by  
	\begin{align}
		\label{eq:recsymbna}
		s^{*} =   \mathlarger{{\cal M}}\Big(
		\scalebox{.7}[1.0]{\bf Data}=\{T, {\bf m}\},  \;\;
		\lambda \Big).
	\end{align}
	In the figure,  
	the symbol,  $\hat{s}(\r)$  denotes
	the initialization required for the numerical optimization. Regularized 
	reconstruction methods that use
	the forms of regularization given in the equations \eqref{eq:regtv1}, 
	\eqref{eq:regtv2}, and \eqref{eq:reghs}
	are also called total variation (TV) methods. We will refer to this type of 
	methods
	as the derivative-based regularization methods.
	Further  developments in derivative-based regularization include total
	generalized variation (TGV) \cite{TGV2},  which is believed to give 
	superior 
	reconstruction.
	A detailed overview of such methods has been provided in the 
	supplementary 
	material.
	It has been observed 
	that compressive sensing
	methods also give reconstruction  artifacts, when the sampling is very 
	sparse 
	and/or noise is very high.
	In  recent years,  there   has been increased focus on developing deep 
	learning based methods \cite{Deep_CT_Jin2017deep, 
	MRI_Deep_schlemper2017deep, 
		MRI_Deep_jin2017deep, Deep_MRI_1, DAGAN, Deep_MRI_3,   
		MRI_Deep_quan2018compressed, Deep_MRI_zhang2020deep},  since it is 
		generally 
	believed that such 
	methods do not produce artifacts in the reconstruction. 
	This belief is based on the following fact:
	while  regularization methods impose  ad hoc prior models on the image 
	characteristics,  
	deep-learning methods encompass  natural prior information derived from 
	training data. 
	However, deep 
	learning-based methods require
	training data in the form of  thousands of image pairs, where each pair 
	contains a typical measured image and the 
	corresponding ground-truth image that might have generated the 
	measurement.  
	Such a requirement limits the 
	applicability of deep learning based methods.  Furthermore, stability 
	issues 
	have also been reported in such deep learning methods for image 
	reconstruction 
	\cite{Deep_Instability}.
	\begin{figure}[ht]
		\centering
		%	\vspace{-0.3cm}
		\includegraphics[width= 0.6\textwidth]{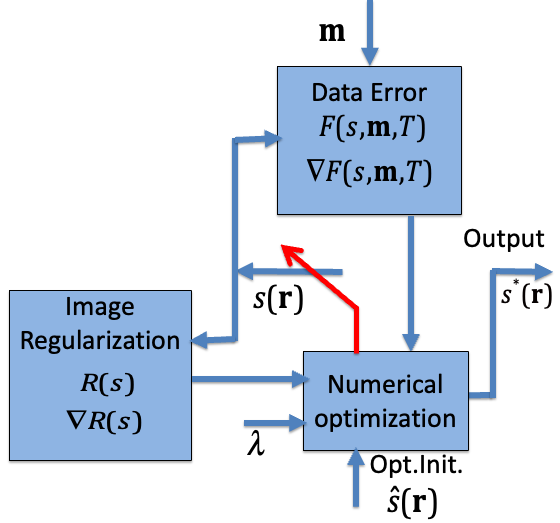}
		%	\vspace{-0.2cm}
		\caption{Schematic of regularized image reconstruction.  ${\bf m}$ 
		denotes 
			the measured data.
			$F(s,{\bf m},T)$ represents the data error with $T$ denoting the 
			measurement model of the imaging
			system.  $R(s)$ is the measure of image roughness 
			(regularization).  
			$\nabla F(s,{\bf m},T)$   and 
			$\nabla R(s)$ represent gradients of  $ F(s,{\bf m},T)$   and 
			$R(s)$ respectively, which are required for the numerical 
			optimization. 
			$\lambda$ is the user parameter that determines the trade-off 
			between data error and
			image roughness.
		}
		\label{fig:basic}
		%	\vspace{-0.3cm}
	\end{figure}
	
	Our goal here is to develop a novel type of reconstruction method that has 
	the desirable
	traits of both deep-learning and  derivative-based  regularization 
	methods.  In other
	words, our interest is to
	develop a method that neither results in a significant amount of 
	distortions under the conditions
	of severe noise and under-sampling, nor requires any training  
	data.    To 
	brief the concepts
	involved in our approach,  
	we first need to understand why derivative-based regularization methods 
	give 
	distortions in the reconstructions when noise or under-sampling  is severe.
	To this end, we first note that employing derivatives in regularization has 
	the
	following desirable effects:  (i) minimizing derivative magnitudes summed
	over pixels  leads to  sparse distribution of   large derivatives in the 
	reconstruction, which results in the  elimination of noise;
	(ii)  additionally, in the case of MRI reconstruction,  minimizing 
	derivative magnitudes 
	counter-acts with  oscillations caused 
	by under-sampling  in  Fourier space, thereby leading to the interpolation 
	of 
	lost 
	Fourier samples. However, minimizing derivative has also an undesirable 
	effect:
	as high-resolution image features will have high derivative values,  using 
	derivative-based regularization inevitably leads
	to distortions that cause  some loss of resolution.     Our goal is to 
	build an improved  
	derivative-based regularization scheme  with the following property:   it 
	should have a reduced penalizing effect
	on the derivatives that are part of the image structure without losing its 
	penalizing
	effect on the derivatives originating from noise and oscillations.  We call 
	such property
	structural adaptivity.  Such a regularization method with structural 
	adaptivity 
	will hopefully have the desirable traits:  it will not give 
	significant 
	amount of
	distortion under severe noise and under-sampling, and it will not require 
	training data.

	To build a regularization  scheme with structural adaptivity,  we draw 
	inspiration from
	the human visual system.    It is a well-known fact that  the  human  
	visual system can
	localize  object boundaries in the presence of a very large amount of noise 
	because of  its
	inherent  ability  to process images simultaneously at multiple resolutions 
	\cite{MR_tanimoto1975hierarchical,  MR_ghassemian2001retina}.  Because of 
	this
	ability,   the human 
	visual system can   discriminate sharp jumps in the intensity that is part 
	of an object boundary
	against sharp jumps caused by noise.    Inspired by the ability of the 
	human visual system,  we build our   structural adaptive regularization   
	in the form of 
	a multiresolution scheme.  We do this by   constructing 
	a  series of reconstruction modules with progressively increasing target 
	resolutions,
	where the reconstruction from a module with lower target resolution serves 
	as a 
	guide
	for building  the adaptive regularization for the module with a higher 
	target
	resolution.    This leads to two questions: 
	how  to define  the adaptive regularization, given a guide image, and 
	how to  define modules with progressively increasing target resolution.
	The first question is answered in Section \ref{sec:arr}, and the second
	question is answered in Section \ref{sec:mr}.  These sections only deal with
	the main concepts involved in our method.  The optimization tools necessary 
	for
	the practicable realization of our method are developed in the 
	supplementary Sections
	\ref{sec:adreg} and  \ref{sec:wtcomp},  and these also constitute  our 
	important
	technical contributions.

	In Section \ref{sec:exp}  we present an extensive set of reconstruction 
	results  and
	demonstrate that the proposed method outperforms
	all derivative-based regularization methods,  and prominent deep learning
	methods. The key factor that makes  our methods superior  to deep learning 
	methods 
	and other regularization
	methods is that the adaptive regularization  does not penalize  high images 
	derivatives that are
	part of the image structures;   this is in contrast with standard 
	regularization methods, which are unable to discriminate
	derivatives that are part of the structures against derivatives 
		originating
	from noise. This is made
		possible by the structural adaptivity resulting from the 
		multiresolution 
	scheme that we propose.
	The idea of  multi-resolution
	has been extensively used for other processing tasks such as image 
	registration 
	\cite{MR_Huttenlocher_registration, MR_alhichri2002multi_registration, 
		MR_bunting2010registration},   and image
	segmentation \cite{MR_kim2003fast_segmentation, 
	MR_tang2006multi_segmentation, 
		MR_janoos2007multi_segmentation} with strong inspiration derived from 
		the
	way the human visual system operates      
	\cite{MR_tanimoto1975hierarchical,  MR_ghassemian2001retina}.  
	However,  the power of multiresolution
	has not been used for gaining robustness in image reconstruction so far,  
	and
	the present work
	uses the power of multiresolution  for the first time.
	The method that we develop here is an extension of the method we  recently 
	published 
	called
	Combined Order Regularization with Optimal Spatial Adaptation (COROSA) 
	\cite{COROSA}. 
	
	\section{The basic adaptive regularized reconstruction}
	\label{sec:arr}
	
	As explained before, our  multiresolution-based adaptive regularization 
	scheme will be in the form of a series
	of reconstruction modules;  each module performing an adaptive 
	regularized
	reconstruction by using the information from the guide image, which is 
	obtained as the reconstruction
	from the previous module. Obviously, the key to the success of this scheme 
	is
	the construction of a basic adaptive regularized reconstruction  module 
	that 
	can extract information for 
	structural adaptation from  the guide image.  To this end, we make use of 
	the
	following observation:   the relative spatial distribution of  the image 
	gradient and  the
	Eigenvalues of image Hessian has important information about the image 
	structure.   We propose to 
	exploit this towards constructing an adaptive regularization  that 
	has  
	the ability 
	to discriminate
	derivatives that are part of the structures against the derivatives that 
	originate 
	from noise (structural adaptivity). 
	The regularization is  constructed as a spatially varying weighted sum of 
	the above-mentioned three terms.
	The weights are determined adaptively  using the guide image
	such that the resulting regularization has the desired structural adaptivity.
	We do this in this section in three stages. 
	In Section \ref{sec:arrcomp},  we describe the basic building blocks to be 
	used
	in adaptive regularization.  Next,  we develop the adaptive regularized
	reconstruction method using these building blocks in 
	Section\ref{sec:arrdef},  and
	in Section \ref{sec:guide2wt} we develop a method to extract the 
	information for 
	adaptation from a given guide image.

	\subsection{Gradient and  canonical second derivatives}
	\label{sec:arrcomp}
	
	For a continuous domain function  $p({\bf r})$,  its gradient denoted by 
	$\nabla 
	p({\bf r})$ 
	is a vector function of the form 
	$\nabla p({\bf r}) = \left[\frac{\partial}{\partial x}p({\bf r})\;\; 
	\frac{\partial}{\partial y}p({\bf r})  \right]^T$.
	We use the same  notation to denote the discrete  gradient  of a discrete 
	image   $s({\bf r})$,
	i.e., we write $\nabla s({\bf r}) = \left[(d_x*s)(\r)\;\;(d_y*s)(\r) 
	\right]^T$, where $*$ denotes convolution,
	and,  $d_{x}(\r)$  and $d_{y}(\r)$ are the filters implementing discrete 
	equivalents
	of the first order-derivatives, 
	$\frac{\partial}{\partial x}$, and   $\frac{\partial}{\partial y}$ 
	respectively.  
	% Implementation of these filters are given in the supplementary
	% material. 
	% It can be expressed as
	% \begin{equation}
	% \label{eq:gradfilt}
	% \nabla s(\r)   = ({\bf d}_{1}*s)(\r)  =   [(d_x*s)(\r)\; (d_y*s)(\r)]^T.
	% \end{equation}
	% Representing the pixel index by $\r = [x\; y]$,   the action of the 
	%filters 
	%can be expressed
	% by
	% \begin{align}
	% \label{eq:dxaction}
	% (d_x*s)( [x\; y])  & = s( [x\; y]) - s( [x-1\; y]) \\
	%  (d_y*s)( [x\; y])  & = s( [x\; y]) - s( [x\; y-1]) \\
	%\end{align}
	Summing this gradient magnitude over all pixels of the image $s(\r)$ gives 
	the first-order total variation regularization    of equation 
	\eqref{eq:regtv1}.

	\noindent
	For a continuous domain function,  $p({\bf r})$, the elementary second 
	derivative 
	operators are given by
	${\cal D}_{xx}p(\r)    =  \frac{\partial^2}{\partial x^2}p({\bf r})$, 
	${\cal D}_{yy}p(\r)    =  \frac{\partial^2}{\partial y^2}p({\bf r})$,   and 
	${\cal D}_{xy}p(\r)    =   \frac{\partial^2}{\partial x\partial y}p({\bf 
		r})$.   The Hessian is given by
	$\nabla^2 p(\r) = \genm{{\cal D}_{xx}p(\r)}{{\cal D}_{yy}p(\r)}{{\cal 
			D}_{xy}p(\r)}$. The Eigenvalues
	of Hessian are given by
	\begin{align}
		{\cal H}^+p({\bf r}) & = \frac{1}{2}\left({\cal L}p({\bf r})+{\cal 
		C}p({\bf 
			r}) \right), \\
		{\cal H}^-p({\bf r}) & = \frac{1}{2}\left({\cal L}p({\bf r})-{\cal 
		C}p({\bf 
			r})  \right),
	\end{align}
	where
	\begin{equation}
		\label{eq:descdef}
		{\cal C}p(\r) = 
		\sqrt{\left({\cal D}_{xx}p(\r)   -  {\cal D}_{yy}p(\r) \right)^2  
			+   4{\cal D}_{xy}p(\r) },
	\end{equation}
	and  ${\cal L}p({\bf r})$ is the well-known Laplacian operator given by
	${\cal L}p({\bf r}) =  {\cal D}_{xx}p(\r)   +  {\cal D}_{yy}p(\r)$. 
	For brevity, we call these Eigenvalues,  the canonical second derivatives 
	(CSDs). The term
	``canonical"  is justified because of the    following reasons:  
	(i) these quantities
	themselves are second derivative taken along the directions that make angles
	$\theta(\r)$ and $\theta(\r)+\pi/2$ with the $x$-axis,   where $\theta(\r)$ 
	is 
	the $\r$-dependent angle
	determined by the elementary derivatives 
	${\cal D}_{xx}p(\r)$, ${\cal D}_{yy}p(\r)$, and  ${\cal D}_{xy}p(\r)$;
	(ii) any  second derivative taken along a direction that makes  an 
	arbitrary 
	angle, $\vartheta$, can be expressed
	as  $\mu {\cal H}^+p({\bf r}) + (1-\mu) {\cal H}^-p({\bf r})$  where $\mu$ 
	is a 
	real number in the range 
	$[0,1]$ that is determined by the angle $\vartheta$; (iii) these quantities 
	are 
	invariant with respect to
	rotations of the image.  
	For a discrete image, $s(\r)$,  we use the same notations to denote the 
	CSDs, 
	with an understanding that
	the constituent operators ${\cal D}_{xx}s(\r)$, ${\cal D}_{yy}s(\r)$, and  
	${\cal D}_{xy}s(\r)$ are implemented
	by means of discrete convolutions with appropriately designed filters 
	denoted 
	by 
	$d_{xx}(\r), d_{yy}(\r)$, and $d_{xy}(\r)$. In the supplementary material,  
	we 
	have given the implementation
	of filters    $d_x$,  $d_y$, $d_{xx}$,   $d_{yy}$, and $d_{xy}$.

	\subsection{Adaptive regularized reconstruction}
	\label{sec:arrdef}
	
	As mentioned before, the most widely used regularizations that are 
	considered 
	to be well-performing 
	are the ones expressed in  Equations (\ref{eq:regtv1}), (\ref{eq:regtv2}), 
	(\ref{eq:reghs}).  $R_1(s)$
	is constructed by simply summing $\|\nabla s({\bf r})\|_2$ across all 
	pixels. 
	Further,  $R_{hs}(s)$
	is constructed by summing   $|{\cal H}^+s({\bf r})|+|{\cal H}^-s({\bf 
			r})|$   
	across  all pixels, whereas,
	$R_2(s)$ is constructed by summing  $\sqrt{({\cal 
	D}_{xx}s(\r))^2+({\cal 
				D}_{yy}s(\r))^2+2({\cal D}_{xy}s(\r))^2}$ across  all pixels. 
				It turns
	out that this is the same as summing $\sqrt{({\cal H}^+s({\bf 
	r}))^2+({\cal 
				H}^+s({\bf r}))^2}$ across
	all pixels.  In our viewpoint, although  these regularizations efficiently 
	remove noise,  they have
	the  disadvantage that they do not exploit any  pattern in the relative 
	distribution of these quantities.
	We hypothesize  that the quantities  $\|\nabla s({\bf r})\|_2$,  ${\cal 
		H}^+s({\bf r})$, and ${\cal H}^-s({\bf r})$, 
	will have some pattern in their spatial distribution,  and this can be 
	exploited to reduce  the loss of  resolution
	that is normally caused by derivative-based regularizations.  To this end,  
	we 
	construct the following 
	regularization:
	\begin{equation}
		\label{eq:regsa}
		R(s,\hat{\beta}_{1},  \hat{\beta}_{2}, 
		\hat{\beta}_{3})   = \sum_{\r} \left(  
		\hat{\beta}_1(\r)\|\nabla s({\bf r})\|_2  +   \hat{\beta}_2(\r) |{\cal 
			H}^+s({\bf r})|   
		+  \hat{\beta}_3(\r) |{\cal H}^-s({\bf r})|
		\right) 
	\end{equation}
	where $s(\r)$ is the candidate for reconstructed image,  and 
	$\hat{\beta}_1(\r)$,  $\hat{\beta}_2(\r)$, and   $\hat{\beta}_3(\r)$  are 
	adaptive weight images satisfying 
	$\sum_i\hat{\beta}_i(\r)=1$ and $\hat{\beta}_i(\r) \ge 0, i=1,2,3$.   The 
	idea 
	is to choose the weights $\hat{\beta}_1(\r)$, 
	$\hat{\beta}_2(\r)$, and   $\hat{\beta}_3(\r)$  appropriately to exploit 
	the  
	pattern in the distribution of
	$\|\nabla s({\bf r})\|_2$,  ${\cal H}^+s({\bf r})$, and ${\cal H}^+s({\bf 
	r})$. 
	At the same time,  because of the
	constraint, $\sum_i\hat{\beta}_i(\r)=1$, the penalizing effect on the noise 
	will 
	not be compromised.  
	The proposed image reconstruction using this regularization  is  
	demonstrated 
	in the schematic of Figure 
	\ref{fig:adreg}.  With given measured samples, ${\bf m}$,   adaptive 
	weight  
	images $\hat{\beta}_1(\r)$,
	$\hat{\beta}_2(\r)$, and   $\hat{\beta}_3(\r)$, and  regularization weight 
	$\lambda$, the numerical optimization
	module evaluates the image regularization  $R(s,\hat{\beta}_{1},  
	\hat{\beta}_{2}, \hat{\beta}_{2})$ and data error
	$F(s,{\bf m},T)$ along with their gradients for a series of candidate 
	images 
	(optimization variable $s(\r)$)
	and determines the empirical minimum of the sum   $J(s,T,{\bf m},\lambda)=
	F(s,T,{\bf m}) + \lambda R(s,\hat{\beta}_{1},  \hat{\beta}_{2}, 
	\hat{\beta}_{2})$.    This empirical minimum,  $s^*(\r)$,  is the  
	candidate  
	$s(\r)$   that has a modulus of
	the gradient of the cost $J(s,T,{\bf m},\lambda)$  lower than certain 
	user-defined tolerance.
	\begin{figure}[htbp]
		\centering
		%	\vspace{-0.3cm}
		\includegraphics[width= 0.6\textwidth]{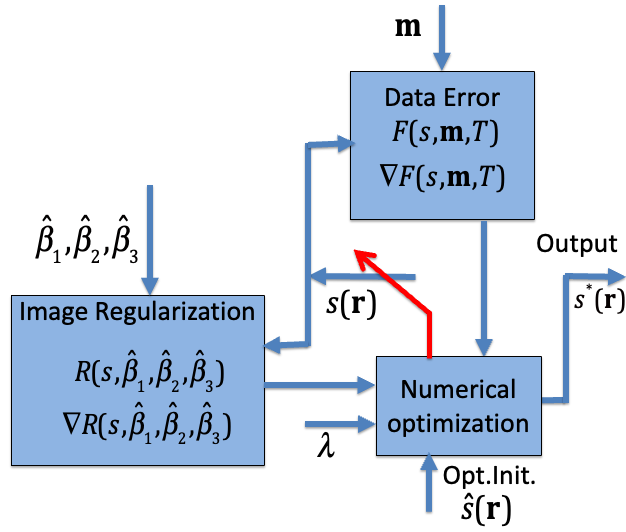}
		%	\vspace{-0.2cm}
		\caption{Schematic of adaptive regularized image reconstruction from 
		given 
			adaptive weights}
		\label{fig:adreg}
		%	\vspace{-0.3cm}
	\end{figure}
	% \begin{figure*}[ht]
	%	\centering
	%	\vspace{-0.3cm}
	%	\includegraphics[width= 
	%0.99\textwidth]{}
	%	\caption{Schematic of adaptive regularized image reconstruction. SPLIT 
	%THIS}
	%	\label{fig:schematicmerged}
	%	\vspace{-0.3cm}
	%\end{figure*}
	In our proposed scheme,   the required adaptive weights are determined 
	from  a 
	given guide  image $\hat{s}(\r)$,  that is supplied by the multi-resolution 
	chain 
	to be elaborated in Section \ref{sec:mr}.
	We will deal with the issue of finding the  adaptive weights  
	$\hat{\beta}_{1}$,  $\hat{\beta}_{2}$, 
	and $\hat{\beta}_{3}$  in Section \ref{sec:guide2wt}. 
	A point to be noted is that we will use $\hat{s}(\r)$  also for 
	initializing 
	the numerical optimization.  It should be emphasized that 
	currently available generic optimization  tools  are unsuitable for this  
	optimization problem and we develop  novel optimization tools for this 
	purpose  in the   supplementary material. 
	We symbolically  denote this  numerical  optimization based reconstruction 
	operation by
	\begin{align}
		\label{eq:recph1}
		s^* =   \mathlarger{{\cal M}_{ad}}\Big(
		\scalebox{.7}[1.0]{\bf Init}
		=\hat{s},  \;
		\scalebox{.7}[1.0]{\bf AdapWt} = \{\hat{\beta}_{1},  \hat{\beta}_{2}, 
		\hat{\beta}_{2}\}, 
		\scalebox{.7}[1.0]{\bf Data}=\{T, {\bf m}\},  \;\;
		\lambda \Big). \nonumber
	\end{align}

	%It should be emphasized that this form
	%is general enough to include the well-known Hessian-Schatten 
	%regularization; 
	%$R(s,\beta_1,\beta_2,\beta_3)$
	%can  be made  identical to Hessian-Schatten regularization by setting      
	%$\beta_1(\r) = 0$,
	% $\beta_2(\r) = 0.5$ and $\beta_3(\r) = 0.5$. 

	\subsection{Determining the adaptive weights from guide image}
	\label{sec:guide2wt}
	
	We next address the problem  of determining the adaptive weights,  
	$\hat{\beta}_1(\r)$,  $\hat{\beta}_2(\r)$,
	and   $\hat{\beta}_3(\r)$ from the given guide image $\hat{s}(\r)$.
	We propose to determine these from
		$\hat{s}(\r)$ via minimizing the same regularization cost 
	$R(s,\beta_1,\beta_2,\beta_3)$
	w.r.t.  $\beta_1(\r)$,  $\beta_2(\r)$,  and   $\beta_3(\r)$ with the 
	substitution 
	$s(\r)=\hat{s}(\r)$.
	We also impose the constraint that $\beta_1(\r)+\beta_2(\r)+\beta_3(\r)=1$. 
	The idea behind this approach is that  the quantities $\|\nabla {s}({\bf 
		r})\|_2$, 
	$ {\cal H}^+s({\bf r})$ and ${\cal H}^-s({\bf r})$  evaluated with 
	substitution 
	$s(\r)=\hat{s}(\r)$ will have a pattern that is similar to that of  the 
	required reconstruction provided that  $\hat{s}(\r)$  is a good 
	approximation
	to the required reconstruction.
	Hence, the resulting  adaptive regularization $R(s,  \hat{\beta}_1, 
	\hat{\beta}_2,  \hat{\beta}_3)$
	to be  used in the reconstruction scheme of  Figure \ref{fig:adreg} will 
	have 
	a lesser penalizing
	effect on the derivatives that confers to the pattern exhibited by 
	guide 
		$\hat{s}(\r)$.  At the same time, 
	since the constraints,  $\sum_i\beta_i(\r)=1$ and $\beta_i(\r) \ge 0, 
	i=1,2,3$ 
	are imposed,
	the adaptive regularization $R(s,  \hat{\beta}_1, \hat{\beta}_2,  
	\hat{\beta}_3)$  will not have any
	reduction in the penalizing effect on noise as well as other related 
	artifacts.  One technical
	point is that  it is not well-defined to minimize 
	$R(s,\beta_1,\beta_2,\beta_3)$  
	w.r.t.     $\beta_1(\r)$,  $\beta_2(\r)$,  and   $\beta_3(\r)$
	because $R(s,\beta_1,\beta_2,\beta_3)$   is linear on   $\beta_1(\r)$, 
	$\beta_2(\r)$,  and   
	$\beta_3(\r)$.
	We instead minimize   $R(s,\beta_1,\beta_2,\beta_3)+\tau L(\beta_1,\beta_2,
	\beta_3)$ where $L(\beta_1,\beta_2, \beta_3)=-\sum_{\r} \log 
	(\beta_1(\r)\beta_2(\r)\beta_2(\r))$.
	The relevant mathematical explanation for this augmentation  with
	$L(\beta_1,\beta_2, \beta_3)$
	is given in the supplementary material.   Further,  the minimization 
	algorithm 
	is developed
	in the supplementary material. We denote this minimization operation by 
	\begin{equation}
		\label{eq:recph2}
		(\hat{\beta}_1,  \hat{\beta}_2, \hat{\beta}_3) =\mathlarger{{\cal 
		M}_w}\Big(
		\scalebox{.7}[1.0]{\bf AdapGd}=\hat{s},   \tau \Big).
	\end{equation}
	Given the guide image,  $\hat{s}$, process of getting the adaptive 
	weights,  
	$\hat{\beta}_1(\r)$,  $\hat{\beta}_2(\r)$,  and $\hat{\beta}_3(\r))$ and 
	then 
	the process of obtaining
	the adaptively regularized reconstruction using 
	$\hat{\beta}_1(\r)$,  $\hat{\beta}_2(\r)$,  and $\hat{\beta}_3(\r))$
	can be represented as a cascade  as given below:
	\begin{align}
		\label{eq:recph12}
		s^* =   \mathlarger{{\cal M}_{ad}}\Big(
		\scalebox{.7}[1.0]{\bf Init}
		=\hat{s},  
		\scalebox{.7}[1.0]{\bf AdapWt} = 
		\mathlarger{{\cal M}_w}\Big(
		\scalebox{.7}[1.0]{\bf AdapGd}=\hat{s},   \tau \Big), 
		\;\;  
		\scalebox{.7}[1.0]{\bf Data}=\{T, {\bf m}\},  \;\;
		\lambda \Big).
	\end{align}
	This cascade  is represented in  Figure \ref{fig:adregg}. This cascade
	is the basic adaptive regularized  reconstruction module to be used for
	building the multiresolution method.
	It should be emphasized that, as an adaptation guide,  the quality of 
	$\hat{s}$  plays a
	crucial role in determining  the quality of the reconstruction $s^*$.  On 
	the 
	other hand,
	as an initializer for optimization,  the quality of $\hat{s}$  determines 
	the speed 
	of completion
	of the numerical optimization.
	\begin{figure}[htbp]
		\centering
		%	\vspace{-0.3cm}
		\includegraphics[width= 0.6\textwidth]{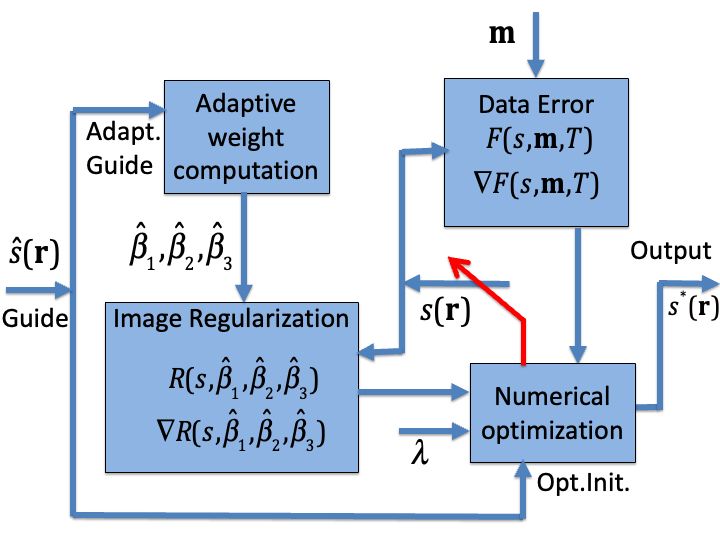}
		%	\vspace{-0.2cm}
		\caption{Schematic of adaptive regularized  reconstruction from given 
		guide 
			image}
		\label{fig:adregg}
		%	\vspace{-0.3cm}
	\end{figure}
	%\begin{figure}[ht]
	%	\centering
	%	\includegraphics[width= 0.41\textwidth]{schematic3.png}
	%	\caption{Schematic of adaptive regularized image reconstruction}
	%	\label{fig:schematic3}
	%\end{figure}
	
	\section{Multi-resolution  based adaptive regularized  reconstruction 
	method}
	\label{sec:mr}
	
	%Note that  the basic   adaptive regularized  reconstruction developed in 
	%Section \ref{sec:arr}
	%alone is not useful as it requires a good quality approximation of the 
	%required
	%reconstruction, which we do not have.  One possibility is to run a 
	%non-adaptive regularized
	%reconstruction and use the resulting output as the guide image. However, 
	%this will not
	%be close approximation of the required reconstruction, and hence,  the 
	%information extracted
	%from this will not be adequate to build the adaptive regularization  that 
	%can preserve
	%sharp intensity variations that are part of the structure. 
	%The goal here is to break this tie using the  multiresolution scheme 
	%described before.
	Here we build the  multiresolution-based adaptive regularized 
	reconstruction method with a
	progressively increasing resolution by stacking  the basic  
	adaptive regularized  reconstruction developed in Section \ref{sec:arr}  
	(Figure \ref{fig:adregg}).
	To this end,  we introduce the
	notion of scale-$(j)$ reconstruction.  
	Recall that, in TV regularized reconstruction of Figure \ref{fig:basic}  and
	adaptive regularized reconstruction of Figure \ref{fig:adregg},   the 
	numerical optimization  module
	searches for a solution within the  space of $N\times N$ images where 
	$N\times N$ is the target image size.    In the case of  scale-$(j)$ 
	reconstruction,   
	the optimization module searches for 
	the solution within   a restricted space of images denoted 
	${\cal S}(N,j)$.  Here ${\cal S}(N,j)$  denotes space of 
	$N\times N$  images obtained by interpolation from 
	$\frac{N}{2^j}\times \frac{N}{2^j}$  images via a $2^j$ interpolation. 
	Note that the images  in ${\cal S}(N,j)$  are $2^j$ times smoother.
	The reconstruction operations with this constrained optimization
	employing non-adaptive and adaptive regularizations are given in Figures
	\ref{fig:basicexp}, and \ref{fig:adreggexp} respectively.
	\begin{figure}[ht]
		\centering
		%	\vspace{-0.3cm}
		\includegraphics[width= 0.45\textwidth]{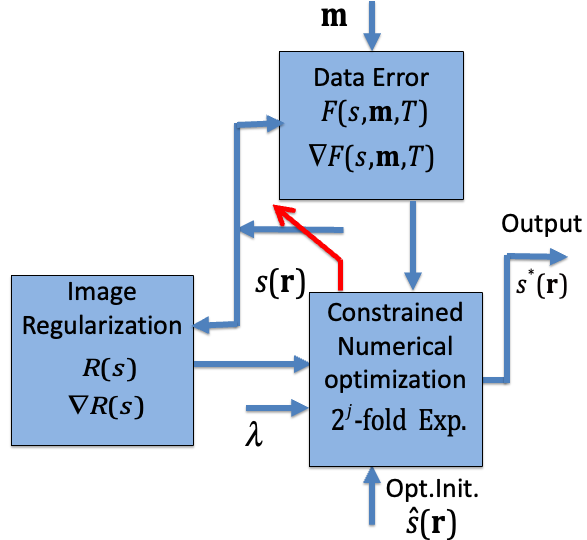}
		%	\vspace{-0.2cm}
		\caption{Schematic of  scale-$(j)$ non-adaptive regularized image 
		reconstruction	}
		\label{fig:basicexp}
		%	\vspace{-0.3cm}
	\end{figure}
	\begin{figure}[ht]
		\centering
		%	\vspace{-0.3cm}
		\includegraphics[width= 0.6\textwidth]{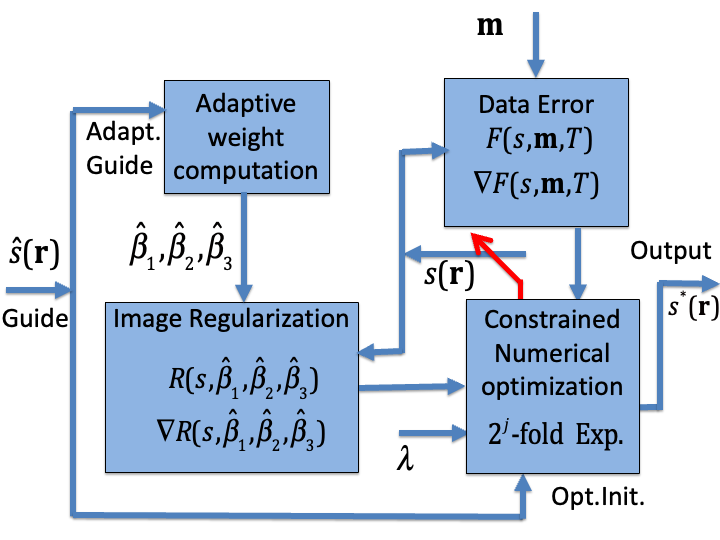}
		%	\vspace{-0.2cm}
		\caption{Schematic of scale-$(j)$  basic adaptive regularized 
		reconstruction}
		\label{fig:adreggexp}
		%	\vspace{-0.3cm}
	\end{figure}
	These reconstruction operations are symbolically represented as follows:
	\begin{align}
		\label{eq:recphsymbnaexp}
		s^{(j)} =   \mathlarger{\bar{\cal M}}^{(j)}\Big(
		\scalebox{.7}[1.0]{\bf Data}=\{T, {\bf m}\},  \;\;
		\lambda \Big)  
	\end{align}
	\begin{align}
		\label{eq:recphsymbaexp}
		s^{(j)} =   \mathlarger{\bar{\cal M}_{ad}}^{(j)}\Big(
		\scalebox{.7}[1.0]{\bf Init}
		=\hat{s},  
		\scalebox{.7}[1.0]{\bf AdapWt} = 
		\mathlarger{{\cal M}_w}\Big(
		\scalebox{.7}[1.0]{\bf AdapGd}=\hat{s},   \tau \Big), 
		\;\;  
		\scalebox{.7}[1.0]{\bf Data}=\{T, {\bf m}\},  \;\;
		\lambda \Big)  
	\end{align}
	With this,  our multiresolution scheme is constructed as given in
	Figure \ref{fig:schememr1}.  The first  block labelled as
	``scale J non-adaptive reconstruction" denotes a reconstruction method  
	where a 
	standard non-adaptive regularization is used (regularization defined in the 
	equation \eqref{eq:reghs}), which is
	represented by Figure \ref{fig:basicexp} with $j=J$.   This is symbolically 
	denoted as
	\begin{align}
		\label{eq:recphsymbaexpJ}
		s^{(J)} =   \mathlarger{\bar{\cal M}}^{(J)}\Big(
		\scalebox{.7}[1.0]{\bf Data}=\{T, {\bf m}\},  \;\;
		\lambda \Big).
	\end{align}
	The output $s^{(J)}$  becomes the adaptation guide to the next block, which 
	performs  a  scale-$(J-1)$ adaptive  
	regularized reconstruction  by searching  for the minimum  of corresponding
	adaptive cost within  ${\cal S}(N,J-1)$.   This block denotes the 
	reconstruction represented by the
	schematic of Figure \ref{fig:adreggexp} with $j=J-1$.   This operation is 
	represented by
	\begin{align}
		\label{eq:recphsymbaexpJ-1}
		s^{(J-1)} =   \mathlarger{\bar{\cal M}_{ad}}^{(J-1)}\Big(
		\scalebox{.7}[1.0]{\bf Init}
		=s^{(J)},  
		\scalebox{.7}[1.0]{\bf AdapWt} = 
		\mathlarger{{\cal M}_w}\Big(
		\scalebox{.7}[1.0]{\bf AdapGd}=s^{(J)},   \tau \Big), 
		\;\;  
		\scalebox{.7}[1.0]{\bf Data}=\{T, {\bf m}\},  \;\;
		\lambda \Big).
	\end{align}
	The output   $s^{(J-1)}$  becomes the  adaptation guide for the 
	scale-$(J-2)$ adaptive regularization
	reconstruction.   
	This process is continued until the scale-$(0)$ reconstruction  is 
	obtained,  
	which can be
	expressed  by the following equation for $j=J-2,\ldots 1, 0$:
	\begin{align}
		s^{(j)} =    {\mathlarger{\bar{\cal M}_{ad}}^{(j)}}\Big(
		\scalebox{.7}[1.0]{\bf Init}
		={s}^{(j+1)},   
		\scalebox{.7}[1.0]{\bf AdapWt} =
		\label{eq:recph12mrloop}
		\mathlarger{{\cal M}_w}\Big(
		\scalebox{.7}[1.0]{\bf AdapGd}= {s}^{(j+1)},   \tau \Big), 
		& \;\;\;\;\;\;\;\; \scalebox{.7}[1.0]{\bf Data}=\{T, {\bf m}\},  \;\;
		\lambda \Big)
	\end{align}
	These steps are represented by the remaining blocks in Figure 
	\ref{fig:schememr1}.  The last block,  performs scale-$(0)$  adaptive
	regularized reconstruction, where the optimization block search for 
	the solution within  ${\cal S}(N,0)$.  Note that ${\cal S}(N,0)$ is the set
	of $N\times N$ images without any interpolation constraint.  In other words,
	the reconstruction is  the same as the one given in Figure \ref{fig:adregg}.
	This means that the label `scale-$(0)$' is redundant and hence 
	we skipped it in the last block of Figure \ref{fig:schememr1}.
	
	Now, an analysis of  this multiresolution scheme is in order.
	In the series of reconstructions of 
	Figure \ref{fig:schememr1}, the first module
	uses non-adaptive regularization. 
	However, this is not a serious problem,
	because,  the target for this module is a reconstruction that is $2^{J}$ 
	smoother
	than the final reconstruction.  Other than this,  every reconstruction 
	module 
	receives
	an adaptation guide that was obtained by another adaptively regularized 
	reconstruction module.  Each reconstruction module except the first
	is able to discriminate high derivative values that are part of the image
	structure against high derivative values arising from noise and 
	oscillations;
	this is achieved by  making use of the information extracted from the 
	reconstruction  output of  previous module. 
	Hence, as we proceed through the modules, the resolution of the series of 
	reconstructions, 
	$\{s^{(j)}, j=J-1,J-2,J-3,\ldots, 2,1, 0\}$  improves significantly and 
	becomes 
	better than the resolution that
	can be obtained by any standard regularization method.  In particular, the 
	final reconstruction
	$s^{(0)}$ will have a significantly improved quality compared to the 
	reconstruction
	that can be obtained by any standard regularization method. 
	\begin{figure}[ht]
		\centering
		\includegraphics[width= 0.95\textwidth]{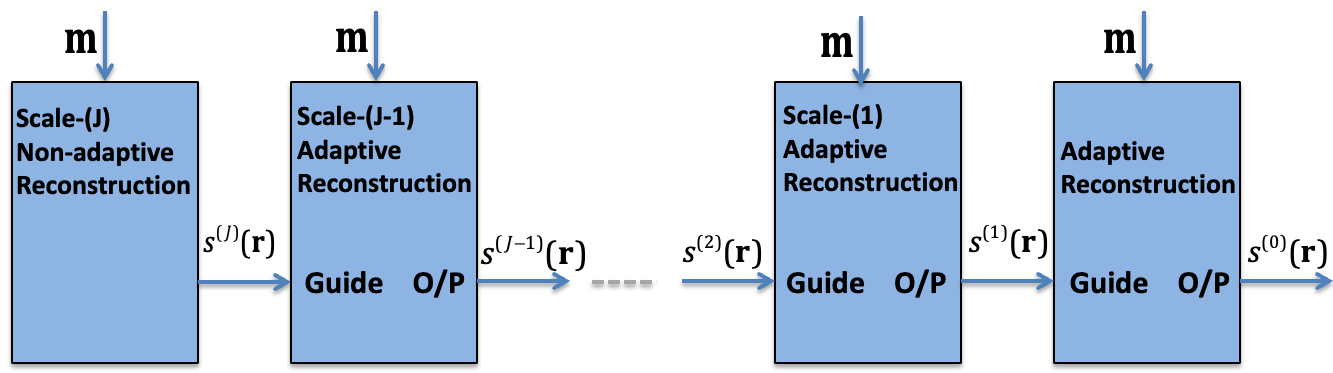}
		%	\vspace{-0.2cm}
		\caption{Schematic of multi-resolution based  adaptive regularized  
			reconstruction}
		\label{fig:schememr1}
	\end{figure}

	Now,  as the final reconstruction, $s^{(0)}(\r)$, will have an improved 
	reconstruction, 
	it can also be used to get improved adaptive weights, $\hat{\beta}_{1},  
	\hat{\beta}_{2}, 
	\hat{\beta}_{3}$, which in turn, can be used to get a further improved 
	reconstruction
	by performing an  adaptive  regularized reconstruction without scale 
	constraint (reconstruction
	scheme of Figure \ref{fig:adregg}).   This argument naturally
	leads  to another series of  adaptive regularized reconstructions,
	where the reconstructed 
	output
	of one module becomes the adaptation guide for the next module.
	This can be symbolically expressed, as given below:     
	\begin{align}
		s_{k+1} =   &\mathlarger{{\cal M}_{ad}}\Big(
		\scalebox{.7}[1.0]{\bf Init}
		={s}_k,  \;\ 
		\scalebox{.7}[1.0]{\bf AdapWt} = 
		\mathlarger{{\cal M}_w}\Big(
		\scalebox{.7}[1.0]{\bf AdapGd}={s}_k,   \tau \Big), 
		\label{eq:recph12iter}
		\;\;\;\;\;\;\;\;
		\scalebox{.7}[1.0]{\bf Data}=\{T, {\bf m}\},  \;\;
		\lambda \Big).
	\end{align}
	We set ${s}_0=s^{(0)}$.   Note that the main difference between
	the series given in the equation 	\eqref{eq:recph12mrloop}   and 
	\eqref{eq:recph12iter} is the following:  the reconstructions of the
	equation \eqref{eq:recph12iter} work on the final resolution without
	any interpolation constraint, whereas,  the reconstructions of the
	equation \eqref{eq:recph12mrloop} work on gradually increasing
	resolutions  with interpolation constraints.
	Under certain technical conditions, 
	the above iteration  leads to the solution of the following joint 
	minimization
	problem:
	\begin{equation}
		(s^*,\beta_1^*, \beta_2^*, \beta_3^*)   = 
		\underset{s,\beta_1, \beta_2, \beta_3}{\operatorname{argmin}} 
		F(s,T,{\bf m}) + \lambda R(s,\beta_1,\beta_2,\beta_3)  
		+
		\lambda\tau L(\beta_1,\beta_2, \beta_3)
	\end{equation}
	In other words, the sequence $\{s_{k}\}$  converges to $s^*$ of the above 
	minimization
	problem.  This completes the description of the proposed method.
	We call  our method  the Hessian-based combined order regularization
	with optimal spatial adaptation (H-COROSA).  The pre-cursor version,  
	COROSA 
	\cite{COROSA}, 
	is obtained by restricting the weights for  both canonical second 
	derivatives
	to be equal.

	\section{Results}
	\label{sec:exp}
	
	MRI  is non-invasive and  hence widely adopted in medical   diagnosis, and 
	biological investigations. However,  there is a serious need
	to reduce the scan time in MRI   to increase the frame rate and avoid 
	motion 
	artifacts. 
	In this viewpoint,  we note that an MRI system measures samples of the 
	Fourier 
	transform of a cross-sectional plane along a
	predefined path determined by the imaging system, and this path is called 
	the 
	sampling trajectory.  The fraction of samples measured 
	by the trajectory can be controlled  through some parameters,  and to 
	reduce 
	the scan time, this fraction   has to be kept low.
	Unfortunately, this leads to serious artifacts if traditional 
	interpolation-based
	reconstruction methods are used.  Recent  derivative-based regularization
	methods (also known as compressive sensing methods),  and deep learning 
	based 
	methods,  are believed to  reduce the artifacts,
	and the power of such modern methods is measured by their ability to  
	recover  
	sufficient resolution  from a low  fraction of samples. 
	Further, there is also demand on  modern reconstruction methods to recover 
	sufficient resolution from samples with a
	low signal-to-noise ratio (SNR). This   is because low-cost MRI systems 
	have low 
	magnetization strength   and hence the measured samples will have 
	low SNR;   also,  some structures in the imaging specimen may have low 
	magnetizability and hence the measured samples will
	have low SNR.   In this regard,  our goal in this section is to study the 
	ability of the proposed regularization
	to reconstruct images  accurately under the conditions of low sampling 
	fraction 
	and high noise in comparison with state-of-the-art regularization
	methods.     Further, we also intend to compare   the proposed    method 
	with  
	currently popular deep learning-based paradigms. Such deep learning based 
	approaches claim that they do not rely on any ad hoc prior and their 
	claimed 
	performance improvement over regularization methods have made them popular 
	in 
	several fields. At the same time, it is necessary to note that such 
	frameworks 
	require large training sets of appropriate images and they have been 
	observed 
	to be unstable under certain conditions \cite{Deep_Instability}.  
	Nevertheless,
	it is important to have a perspective on how such methods perform  compared
	to  the proposed method.
	
	With these stated objectives, we set up the experiments in two parts: the 
	first 
	part compares H-COROSA with existing derivative-based regularization 
	methods, 
	and the second part deals with the comparison against deep 
	learning based methods. With both sets of experiments,
	we compare the reconstruction performance using two quantitative scores: 
	Signal 
	to Noise Ratio (SNR) and Structurally Similarity Index 
	(SSIM) \cite{ssim}. These are widely used in image restoration literature 
	and 
	are useful in the objective evaluation of reconstruction quality. 
	The experimental setup for each set is described in the corresponding 
	subsections below.  Before we proceed, we note that sampling
	fraction is called sampling density in the image reconstruction literature 
	and 
	we will use the same term.  Further,  in this section, 
	a sampling trajectory with parameters chosen for a specific sampling 
	density is called a sampling scheme.

	\subsection{Comparison with derivative-based regularization methods}
	
	%  Fix  Figure 4

	\begin{figure}[htbp]
		\centering
		%	\vspace{-0.5cm}
		\includegraphics[width= 0.98\textwidth]{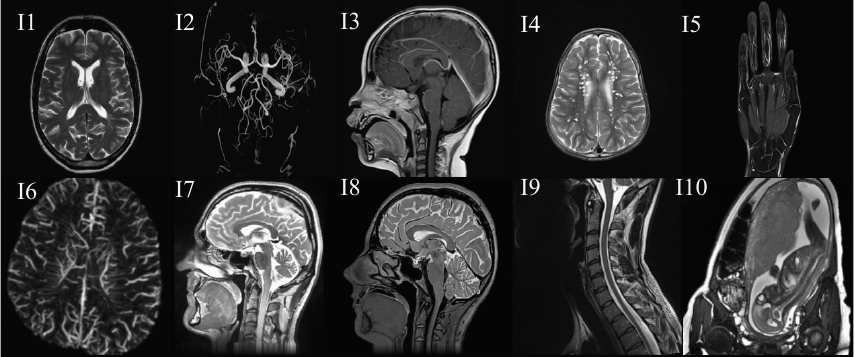}
		%	\vspace{-0.2cm}
		\caption{MRI Reference Images I1-I10.}
		\label{fig:mri_data}
		%	\vspace{-0.3cm}
	\end{figure}
	
	\begin{figure}[htbp]
		\centering
		%	\vspace{-0.5cm}
		\includegraphics[width= 0.98\textwidth]{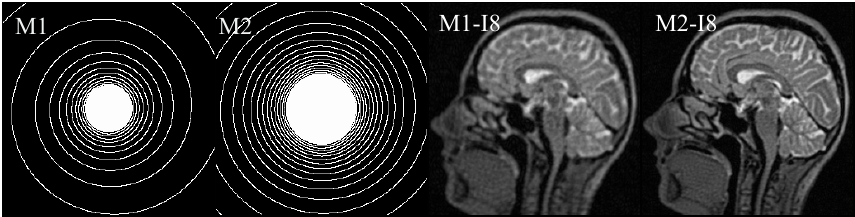}
		%	\vspace{-0.2cm}
		\caption{Spiral sampling trajectories 
			$M1$(10\%) and $M2$(20\%), Zero filled Fourier inversions of 
			undersampled 
			$I8$ with samples from $M1$(M1-I8) and $M2$(M2-I8).}
		\label{fig:mri_traj}
		%	\vspace{-0.3cm}
	\end{figure}

	We consider the following   regularization methods:
	\begin{enumerate}
		\item
		Second-order Total Variation (TV2) \cite{TV1_Rudin,Scherzer_tv2_98}: 
		this belongs to the class of non-adaptive regularization method 
		depicted in
		Figure 	\ref{fig:basic}  with $R(g)$ replaced by the form given 
		in Equation  \eqref{eq:regtv2}.
		This regularization
		still remains one of the most popular and robust  regularization 
		functionals in image restoration. Lustig et. al. 
		\cite{MRI_Sparse_Lustig2007sparse} used TV in their seminal work on 
		undersampled MRI reconstruction. 
		\item
		Hessian-Schatten (HS) norm regularization \cite{HS_13}: 
		This again belongs to the class of non-adaptive regularization depicted 
		in
		Figure 	\ref{fig:basic}  with $R(g)$ replaced by the form given 
		in Equation  \eqref{eq:reghs}.  
		\item
		Total Generalized Variation (TGV) \cite{TGV2}: Total Generalized 
		Variation 
		(TGV) has been proposed 
		as a generalization of Total Variation (TV).  Only a specific form of  
		TGV, known 
		as   second-order TGV is practicable for our problem and hence we choose
		this form.
		\item
		Combined Order Regularization with Optimal Spatial Adaptation (COROSA) 
		\cite{COROSA}: COROSA is a 
		spatially weighted multi-order derivative-based regularization method 
		for image 
		restoration. This method 
		has been applied to Total Internal Reflection Microscopy (TIRF) 
		restoration and 
		MRI reconstruction
		with a performance that  is better than the  other derivative-based 
		regularization 
		methods. This is 
		the precursor of the method developed in this paper, H-COROSA.
	\end{enumerate}
	We used a set of ten MRI images (I1-I10) shown in Figure \ref{fig:mri_data} 
	as 
	the reference models
	for comparing the proposed method with the methods listed above. 
	We intend to study the effect of both sampling density and sampling 
	trajectory 
	on the performance
	of various 
	reconstruction methods.  For evaluating performance under variation in 
	sampling 
	density, 
	we first consider a spiral sampling scheme;  we used two forms of spiral 
	with 
	their parameters
	adjusted such that the  trajectories cover 10\%  and 20\% of total Fourier 
	samples.
	These trajectories are
	given in Figure \ref{fig:mri_traj} with labels $M_{1}$  and $M_{2}$.   The 
	required 
	measurements were 
	simulated  by Fourier
	transforming  the model images as a first step, and then picking the 
	Fourier samples that fall 
	in the trajectory. 
	The spiral trajectory is an important case  because it is considered to be 
	the most 
	efficient among other trajectories and its theoretical background is well 
	established.  To make the measurements 
	realistic,  we also simulated the effect of thermal noise in measurements 
	by 
	adding Additive White Gaussian Noise (AWGN) to the simulated Fourier 
	samples 
	\cite{MRI_Haacke1999magnetic}. The noise variance was set using the 
	strategy of 
	Ravishankar et. al. \cite{MRI_Bresler, COROSA}, such that the noisy 
	reconstruction from simple Fourier inversion of the full set of samples 
	with 
	added noise, gives a chosen Peak Signal to Noise Ratio (PSNR) value in 
	comparison with the clean original image. 
	We chose the PSNR to be  20dB for the measurements coming from  both
	spiral 
	trajectories. With two trajectories and ten model  images, twenty noisy 
	measurement sets were simulated  and given as input to all reconstruction 
	methods to be evaluated.
	Figure \ref{fig:mri_traj} also shows   the inverse transform of the two 
	sets of 
	samples
	for  a specific model image, where we filled in missing sample locations 
	with 
	zeros.  
	Table \ref{table:ssim_snr_spiral} compares the reconstruction scores of the 
	proposed  methods
	with other regularization-based methods.
	\begin{table}[htbp]
		\centering
		\caption{MRI Reconstruction Scores with Spiral Trajectory: 10\% and 
		20\% 
			Sampling Densities}
		%	\resizebox{0.48\textwidth}{0.98\textheight}{
		%	\vspace{-0.3cm}
		\renewcommand{\arraystretch}{0.95}
		\begin{tabular}[t]{ccccc}
			\toprule
			\multirow{2}{*}{I1} &\multicolumn{2}{c}{SSIM} 
			&\multicolumn{2}{c}{SNR} 
			\\  &10\% &20\% &10\% &20\%\\
			\midrule
			H-COR. &\textbf{.967}&\textbf{.983}&\textbf{24.57}&27.94\\
			COR. &.966&\textbf{.983}&24.09&\textbf{28.12}\\
			TGV2 &.959&.979&22.86&26.68\\
			TV2  &.955&.972&22.07&24.71\\
			HS   &.955&.976&22.14&25.78\\
			\toprule
			\multirow{2}{*}{I2} &\multicolumn{2}{c}{SSIM} 
			&\multicolumn{2}{c}{SNR} 
			\\  &10\% &20\% &10\% &20\%\\
			\midrule
			H-COR. &\textbf{.994}&\textbf{.998}&\textbf{25.02}&\textbf{30.74}\\
			COR. &.992&\textbf{.998}&23.58&30.45\\
			TGV2 &.982&.995&20.20&26.06\\
			TV2  &.978&.990&18.95&22.68\\
			HS   &.978&.993&19.09&24.54\\
			\toprule
			\multirow{2}{*}{I3} &\multicolumn{2}{c}{SSIM} 
			&\multicolumn{2}{c}{SNR} 
			\\  &10\% &20\% &10\% &20\%\\
			\midrule
			H-COR. &\textbf{.921}&\textbf{.968}&\textbf{20.98}&\textbf{25.17}\\
			COR. &.898&.962&19.44&24.58\\
			TGV2 &.879&.952&18.33&23.12\\
			TV2  &.879&.940&18.00&21.50\\
			HS   &.882&.948&18.18&22.62\\
			\toprule
			\multirow{2}{*}{I4} &\multicolumn{2}{c}{SSIM} 
			&\multicolumn{2}{c}{SNR} 
			\\ &10\% &20\% &10\% &20\%\\
			\midrule
			H-COR. &\textbf{.945}&.974&\textbf{20.14}&24.12\\
			COROSA &.938&\textbf{.975}&19.48&\textbf{24.38}\\
			TGV2 &.888 &.926&18.50&22.82\\
			TV2  &.922&.958&18.16&21.00\\
			HS   &.923&.964&18.19&21.98\\
			\toprule
			\multirow{2}{*}{I5} &\multicolumn{2}{c}{SSIM} 
			&\multicolumn{2}{c}{SNR} 
			\\ &10\% &20\% &10\% &20\%\\
			\midrule
			H-COR. &\textbf{.987}&.995&\textbf{21.36}&26.99\\
			COR. &.986&\textbf{.996}&21.21&\textbf{27.43}\\
			TGV2 &.977&.984&18.62&19.96\\
			TV2  &.972&.984&17.15&19.36\\
			HS   &.972&.987&17.18&20.98\\
			\bottomrule
		\end{tabular}
		%	\vspace{-0.4cm}
		%		}
		%\quad
		%	\resizebox{0.48\textwidth}{0.98\textheight}{
		\renewcommand{\arraystretch}{0.95}
		\begin{tabular}[t]{ccccc}
			\toprule
			\multirow{2}{*}{I6} &\multicolumn{2}{c}{SSIM} 
			&\multicolumn{2}{c}{SNR} 
			\\  &10\% &20\% &10\% &20\%\\
			\midrule
			H-COR. &.937&.983&19.39&25.51\\
			COR. &\textbf{.940}&\textbf{.985}&\textbf{19.71}&\textbf{26.12}\\
			TGV2 &.926&.984&18.33&25.80\\
			TV2  &.921&.982&18.22&25.52\\
			HS   &.922&.982&18.32&25.57\\
			\toprule
			\multirow{2}{*}{I7} &\multicolumn{2}{c}{SSIM} 
			&\multicolumn{2}{c}{SNR} 
			\\  &10\% &20\% &10\% &20\%\\
			\midrule
			H-COR. &\textbf{.976}&\textbf{.994}&\textbf{31.14}&37.83\\
			COR. &.968&.992&29.44&36.46\\
			TGV2 &.968&\textbf{.994}&29.08&\textbf{38.20}\\
			TV2  &.965&\textbf{.994}&28.98&37.99\\
			HS   &.966&\textbf{.994}&29.25&38.09\\
			\toprule
			\multirow{2}{*}{I8} &\multicolumn{2}{c}{SSIM} 
			&\multicolumn{2}{c}{SNR} 
			\\  &10\% &20\% &10\% &20\%\\
			\midrule
			H-COR. &\textbf{.878}&\textbf{.942}&\textbf{17.42}&20.91\\
			COR. &.851&.937&16.73&\textbf{20.98}\\
			TGV2 &.853&.939&16.52&20.56\\
			TV2  &.843&.932&16.46&20.32\\
			HS   &.847&.934&16.60&20.48\\
			
			\toprule
			\multirow{2}{*}{I9} &\multicolumn{2}{c}{SSIM} 
			&\multicolumn{2}{c}{SNR} 
			\\  &10\% &20\% &10\% &20\%\\
			\midrule
			H-COR. &\textbf{.927}&.964&\textbf{18.34}&22.24\\
			COR. &.921&\textbf{.965}&17.95&\textbf{22.40}\\
			TGV2 &.900&.957&16.73&21.19\\
			TV2  &.895&.952&16.56&20.56\\
			HS   &.896&.952&16.65&20.69\\
			\toprule
			\multirow{2}{*}{I10} &\multicolumn{2}{c}{SSIM} 
			&\multicolumn{2}{c}{SNR} 
			\\ &10\% &20\% &10\% &20\%\\
			\midrule
			H-COR. &\textbf{.902}&\textbf{.953}&\textbf{21.20}&24.95\\
			COR. &.892&.952&20.68&\textbf{25.01}\\
			TGV2 &.886&.954&20.10&24.91\\
			TV2  &.880&.949&19.83&24.50\\
			HS   &.882&.949&19.92&24.59\\
			\bottomrule
		\end{tabular}
		%	\vspace{-0.4cm}
		%		}
		\label{table:ssim_snr_spiral}
	\end{table}

	%\begin{figure}[ht]
	%	\centering
	%	\includegraphics[width= 
	%0.45\textwidth]{Figure8_MRI_9_Spiral_SR_1_Overview_ACOROSA}
	%	\caption{MRI Reconstruction of I8 from 10\% spiral samples selected 
	%using 
	%Mask M1}
	%	\label{fig:mri_reconstruction_temp1_sr_1_mri8}
	%\end{figure} 
	
	%Figure \ref{fig:mri_reconstruction_temp1_sr_1_mri8} shows the 
	%reconstruction 
	%with 10\% spiral samples from all the methods. It is evident that all 
	%methods 
	%suffer from some loss of information on account of the low sampling 
	%density, 
	%but H-COROSA reconstruction is cleaner and sharper than the other 
	%reconstructions. COROSA is able to reconstruct details but at the cost of 
	%more 
	%artifacts from undersampling. COTV reconstruction has staircase artifacts 
	%and 
	%HS reconstruction has blurring from oversmoothing. 
	
	From Table \ref{table:ssim_snr_spiral},  it is
	evident that 
	H-COROSA  restores 
	resolution  better than other methods, giving higher reconstruction 
	scores.  
	While the 
	proposed method
	outperforms the compared methods in both types of sample sets,  the 
	superiority 
	of the proposed
	more evident if we compare reconstructions obtained from 10\% sample sets.  
	It shows the
	advantage of adaptively combining the canonical second derivatives and 
	gradients.  The closest
	competitor to the proposed methods is COROSA,  which is the pre-cursor of 
	the 
	proposed method.
	Figure \ref{fig:mri_reconstruction_mri2_spiral_sr_1_region1} shows a 
	portion of 
	the reconstruction result of $I2$ from 10\% spiral samples, where H-COROSA 
	is 
	clearly able to restore a fine image structure that is missed by all other 
	methods. This is evident in the corresponding scanline where a clear rise 
	in 
	intensity is visible in H-COROSA scanline, in alignment with the original 
	intensity profile. Figure 
	\ref{fig:mri_reconstruction_mri11_spiral_sr_2_region1} shows a 
	reconstructed 
	region from $I10$ with 20\% spiral samples. Again, H-COROSA reconstruction 
	is 
	more clear and smooth, while distortions  
	are visible in the closest competitor, 
	the  COROSA construction.   This 
	clearly 
	confirms the advantage
	of adaptively combining three terms, i.e.,  the advantage of combining two 
	canonical  
	second
	derivatives and the gradient.  Recall that COROSA only uses two terms,  
	namely 
	the gradient and
	the  norm of the Hessian.   The corresponding scanline shows that 
	H-COROSA aligns
	with the original intensity profile whereas the closest competitor COROSA 
	suffers from oscillations. Figure 
	\ref{fig:mri_reconstruction_mri3_spiral_sr_2_region1} shows a reconstructed 
	portion of $I3$ where H-COROSA not only restores fine details  but also has 
	significantly reduced artifacts.
	The intensity profile along the selected scan line also shows that H-COROSA 
	not 
	only tracks the clean original intensity  but also avoids oscillations. In 
	effect, we can summarize from the first experiment that H-COROSA  provides 
	significant advantages over other state-of-the-art regularization methods 
	in 
	MRI reconstruction, when the sampling density is low.  
	
	\begin{figure}[htbp]
		\centering
		%	\vspace{-0.3cm}
		\includegraphics[width= 
		0.8\textwidth]{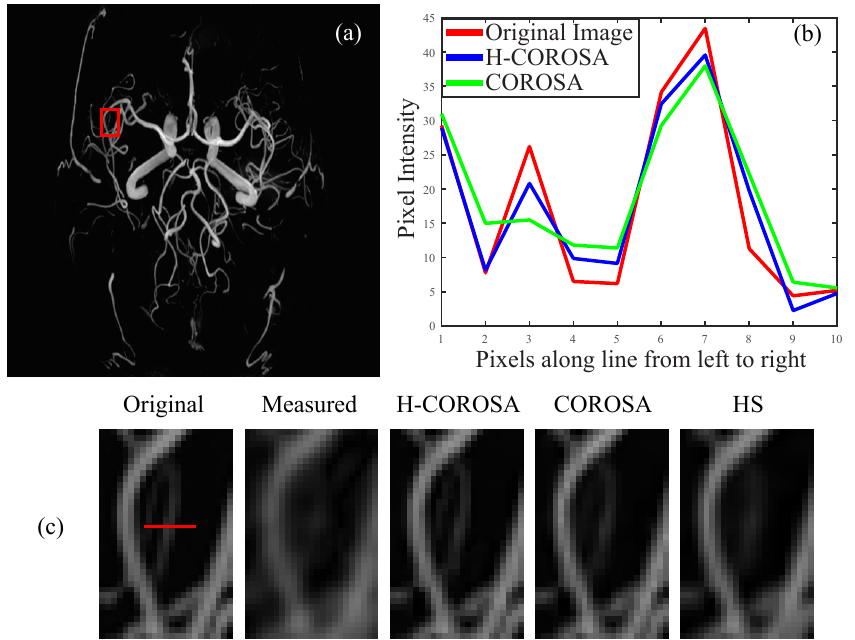}
		%	\vspace{-0.2cm}
		\caption{Reconstruction of $I2$ from 10\% spiral samples: (a) Original 
		$I2$ 
			with selected region (b) Intensity profile along a scanline in the 
			selected 
			region (c) Comparison of selected region from reconstructions along 
			with 
			the scanline location.  Here `measured'  denotes  zero-filled 
			inverse DFT of
			measured samples.}
		\label{fig:mri_reconstruction_mri2_spiral_sr_1_region1}
		%	\vspace{-0.4cm}
	\end{figure}
	
	\begin{figure}[htbp]
		\centering
		%	\vspace{-0.3cm}
		\includegraphics[width= 
		0.8\textwidth]{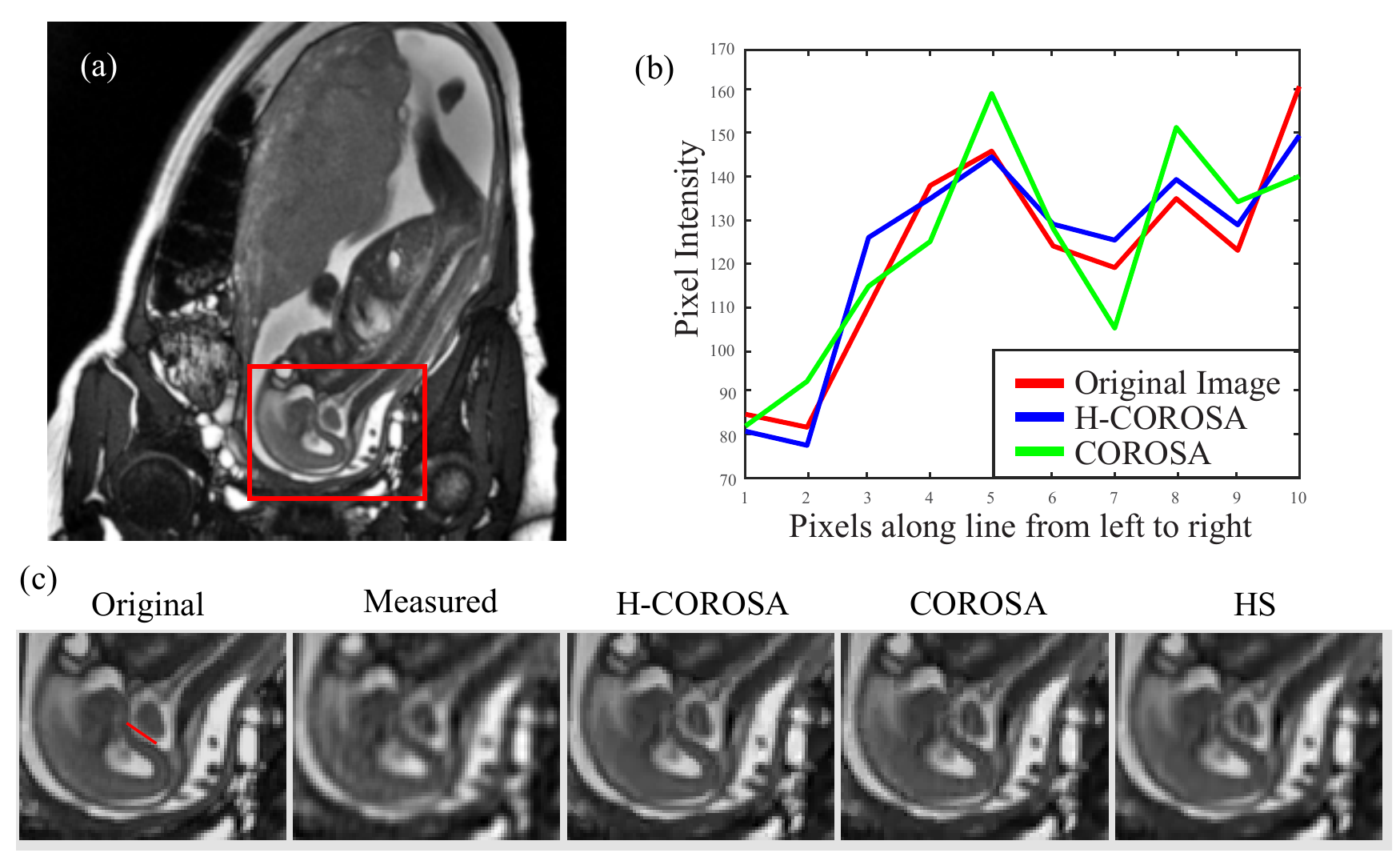}
		%	\vspace{-0.2cm}
		\caption{Reconstruction of $I10$ from 20\% spiral samples: (a) Original 
			$I10$ with selected region (b) Intensity profile along a scanline 
			in the 
			selected region (c) Comparison of selected region from 
			reconstructions 
			along with the scanline location. Here `measured'  denotes  
			zero-filled inverse DFT of
			measured samples.}
		\label{fig:mri_reconstruction_mri11_spiral_sr_2_region1}
		%	\vspace{-0.4cm}
	\end{figure}
	
	\begin{figure}[htbp]
		\centering
		%	\vspace{-0.3cm}
		\includegraphics[width= 
		0.8\textwidth]{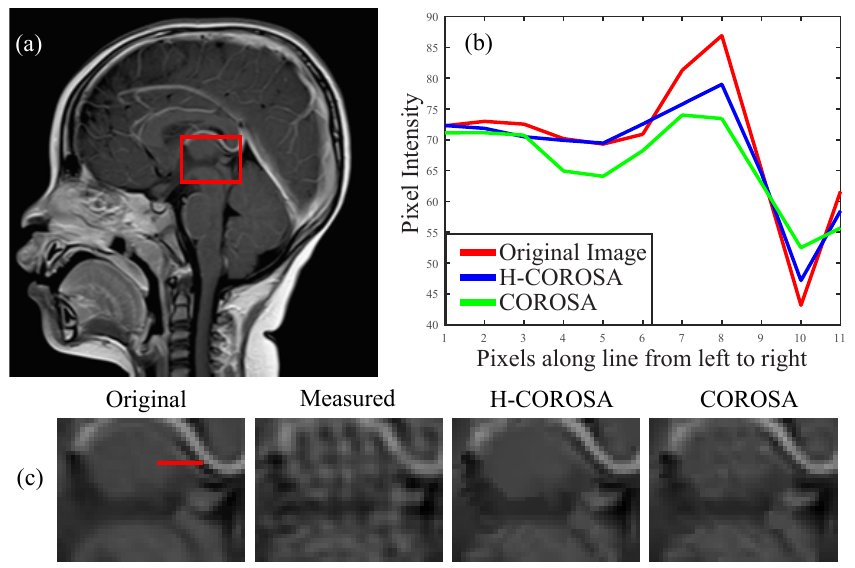}
		%	\vspace{-0.2cm}
		\caption{Reconstruction of $I3$ from 20\% spiral samples: (a) Original 
		$I2$ 
			with selected region (b) Intensity profile along a scanline in the 
			selected 
			region (c) Comparison of selected region from reconstructions along 
			with 
			the scanline location.  Here `measured'  denotes  zero-filled 
			inverse DFT of
			measured samples.}
		\label{fig:mri_reconstruction_mri3_spiral_sr_2_region1}
		%	\vspace{-0.5cm}
	\end{figure}

	Naturally, the next question is whether this advantage of H-COROSA is 
	trajectory specific 
	and so we consider two forms of sampling trajectories:  radial trajectory  
	given 
	the figure
	\ref{fig:mri_reconstruction_mri9_radial_tsp_sr_1_overview}.R1,  and a form 
	of 
	quasi-random
	trajectory known as Traveling Salesman Problem (TSP) trajectory given in 
	the 
	Figure 
	\ref{fig:mri_reconstruction_mri9_radial_tsp_sr_1_overview}.T1.  The TSP  
	trajectory proposed
	by Chauffert et al \cite{MRI_Chauffert},   was
	inspired by the recent observation that random sampling leads to improved 
	reconstruction \cite{CS_candes2006robust, MRI_Sparse_Lustig2007sparse, 
		CS_MRI_knoll2011adapted, CS_MRI_krahmer2012beyond, 
		MRI_TSP_chauffert2013travelling}.
	As a truly  random trajectory is not physically realizable,  the authors 
	constructed the trajectory
	given in the Figure by solving the well-known traveling salesman problem. 
	We 
	set the sampling density of $R1$ to 10\% and $T1$ to 12\%, since we are 
	interested in performance under low sampling density. The noise variance 
	was 
	set such that the  PSNR is 20dB, as done in the previous experiment. Figure 
	\ref{fig:mri_reconstruction_mri9_radial_tsp_sr_1_overview} also shows the 
	loss 
	of details using  
	zero-filled Fourier inversions. The evaluation scores of the 
	reconstructions 
	obtained by various methods
	from the sample set corresponding to these trajectories are given in Table  
	\ref{table:ssim_snr_tsp}.
	\begin{figure}[htbp]
		\centering
		%	\vspace{-0.3cm}
		\includegraphics[width= 
		0.8\textwidth]{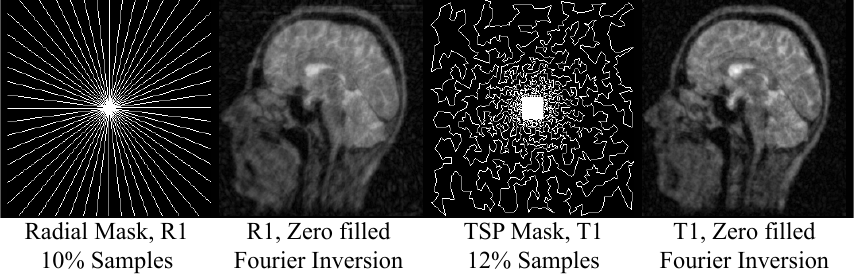}
		%	\vspace{-0.2cm}
		\caption{Radial sampling trajectory  $R1$ and TSP trajectory $T1$ 
		along 
				with zero filled Fourier inversions of undersampled $I8$ image 
				with samples 
				from corresponding trajectories.}
		\label{fig:mri_reconstruction_mri9_radial_tsp_sr_1_overview}
		%	\vspace{-0.4cm}
	\end{figure}
	
	\begin{table}[htbp]
		\centering
		\caption{MRI Reconstruction: Scores with TSP and  Radial Sampling}
		%	\resizebox{0.24\textwidth}{!}{
		%	\vspace{-0.3cm}
		\renewcommand{\arraystretch}{0.95}
		\begin{tabular}[t]{ccccc}
			\toprule
			\multirow{2}{*}{I1} &\multicolumn{2}{c}{SSIM} 
			&\multicolumn{2}{c}{SNR} 
			\\  &TSP &Rad. &TSP &Rad.\\
			\midrule
			H-COR. &\textbf{.970}&\textbf{.963}&\textbf{25.66}&\textbf{24.03}\\
			COR. &.968&.961&25.24&23.50\\
			TGV2 &.956&.946&23.33&21.42\\
			TV2  &.954&.943&22.68&21.29\\
			HS   &.955&.944&22.84&21.43\\
			\toprule
			\multirow{2}{*}{I2} &\multicolumn{2}{c}{SSIM} 
			&\multicolumn{2}{c}{SNR} 
			\\  &TSP &Rad. &TSP &Rad.\\
			\midrule
			H-COR. &\textbf{.997}&\textbf{.994}&\textbf{28.72}&\textbf{25.04}\\
			COR. &.996&.992&28.05&23.77\\
			TGV2 &.986&.951&22.11&17.75\\
			TV2  &.983&.973&20.71&18.50\\
			HS   &.983&.974&20.98&18.73\\
			\toprule
			\multirow{2}{*}{I3} &\multicolumn{2}{c}{SSIM} 
			&\multicolumn{2}{c}{SNR} 
			\\  &TSP &Rad. &TSP &Rad.\\
			\midrule
			H-COR. &\textbf{.916}&\textbf{.894}&\textbf{20.93}&\textbf{18.77}\\
			COR. &.886&.857&19.44&17.20\\
			TGV2 &.835&.747&17.78&15.04\\
			TV2  &.843&.789&17.59&15.49\\
			HS   &.851&.798&17.97&15.76\\
			\toprule
			\multirow{2}{*}{I4} &\multicolumn{2}{c}{SSIM} 
			&\multicolumn{2}{c}{SNR} 
			\\ &TSP &Rad. &TSP &Rad.\\
			\midrule
			H-COR. &\textbf{.958}&\textbf{.944}&\textbf{22.07}&\textbf{20.15}\\
			COROSA &.953&.940&21.76&19.94\\
			TGV2 &.930&.866&19.76&17.75\\
			TV2  &.929&.910&19.37&18.22\\
			HS   &.930&.911&19.46&18.31\\
			\toprule
			\multirow{2}{*}{I5} &\multicolumn{2}{c}{SSIM} 
			&\multicolumn{2}{c}{SNR} 
			\\  &TSP &Rad. &TSP &Rad.\\
			\midrule
			H-COR. &\textbf{.991}&\textbf{.980}&23.88&19.16\\
			COR. &.990&.979&\textbf{23.89}&\textbf{19.18}\\
			TGV2 &.982&.959&20.81&16.55\\
			TV2  &.979&.964&19.46&16.42\\
			HS   &.979&.965&19.54&16.50\\
			\bottomrule
		\end{tabular}
		%	}
		%\quad
		%	\resizebox{0.24\textwidth}{!}{
		%	\vspace{-0.3cm}
		\renewcommand{\arraystretch}{0.95}
		\begin{tabular}[t]{ccccc}
			\toprule
			\multirow{2}{*}{I6} &\multicolumn{2}{c}{SSIM} 
			&\multicolumn{2}{c}{SNR} 
			\\  &TSP &Rad. &TSP &Rad.\\
			\midrule
			H-COR. &\textbf{.911}&\textbf{.878}&\textbf{17.93}&\textbf{15.86}\\
			COR. &.859&.859&17.28&15.55\\
			TGV2 &.838&.806&15.25&13.70\\
			TV2  &.844&.806&15.20&13.65\\
			HS   &.846&.808&15.31&13.74\\
			\toprule
			\multirow{2}{*}{I7} &\multicolumn{2}{c}{SSIM} 
			&\multicolumn{2}{c}{SNR} 
			\\  &TSP &Rad. &TSP &Rad.\\
			\midrule
			H-COR. &\textbf{.965}&\textbf{.956}&\textbf{29.69}&\textbf{27.84}\\
			COR. &.949&.937&27.50&25.90\\
			TGV2 &.930&.914&25.70&24.04\\
			TV2  &.928&.910&25.58&24.02\\
			HS   &.933&.916&26.05&24.43\\
			\toprule
			\multirow{2}{*}{I8} &\multicolumn{2}{c}{SSIM} 
			&\multicolumn{2}{c}{SNR} 
			\\  &TSP &Rad. &TSP &Rad.\\
			\midrule
			H-COR. &\textbf{.880}&\textbf{.840}&\textbf{17.82}&\textbf{16.37}\\
			COR. &.847&.808&16.97&15.64\\
			TGV2 &.823&.797&16.28&15.19\\
			TV2  &.822&.787&16.17&15.13\\
			HS   &.827&.792&16.33&15.27\\
			\toprule
			\multirow{2}{*}{I9} &\multicolumn{2}{c}{SSIM} 
			&\multicolumn{2}{c}{SNR} 
			\\  &TSP &Rad. &TSP &Rad.\\
			\midrule
			H-COR. &\textbf{.932}&\textbf{.900}&\textbf{19.15}&\textbf{16.35}\\
			COR. &.917&.885&18.28&15.54\\
			TGV2 &.877&.847&16.07&13.98\\
			TV2  &.871&.837&15.81&13.79\\
			HS   &.875&.841&15.99&13.98\\
			\toprule
			\multirow{2}{*}{I10} &\multicolumn{2}{c}{SSIM} 
			&\multicolumn{2}{c}{SNR} 
			\\ &TSP &Rad. &TSP &Rad.\\
			\midrule
			H-COR. &\textbf{.882}&\textbf{.841}&\textbf{20.83}&\textbf{18.55}\\
			COR. &.865&.818&20.25&17.92\\
			TGV2 &.827&.772&18.84&16.56\\
			TV2  &.831&.770&18.60&16.35\\
			HS   &.835&.775&18.77&16.49\\
			\bottomrule
		\end{tabular}
		%	\vspace{-0.4cm}
		%	}
		\label{table:ssim_snr_tsp}
	\end{table}
	
	Table \ref{table:ssim_snr_tsp}  shows that H-COROSA performs better than 
	all methods with both trajectories, with a higher improvement in score over 
	other 
	methods in the case of TSP sampling. This table clearly confirms that
	H-COROSA gives better reconstruction over other regularization  methods 
	with 
	low sampling density, irrespective of the utilized trajectory. Figure 
	\ref{fig:mri_reconstruction_mri9_tsp_sr_12_region1} shows a reconstructed 
	region from $I8$ with 12\% TSP sampling, where the restoration by H-COROSA 
	is 
	noticeable sharper than COROSA. The corresponding scanline shows that both 
	methods smooth out some image features on account of low sampling density 
	and 
	noise.  However, H-COROSA is able to restore sharper details aligning 
	better 
	with  the original intensity profile, compared to COROSA. In summary, the 
	first 
	set of experiments show that the proposed H-COROSA approach is able to 
	restore 
	image details better than all well-demonstrated regularization-based 
	methods in 
	the presence of 
	under-sampling and noise. This improvement  is reflected both in terms of 
	quantitative scores (SNR and SSIM) as well as visual  inspection of 
		reconstructions, where 
	H-COROSA is able to restore fine structures while eliminating 
	distortions 
	caused by  
	undersampling and noise. 
	
	\begin{figure}[htbp]
		\centering
		%	\vspace{-0.3cm}
		\includegraphics[width= 
		0.8\textwidth]{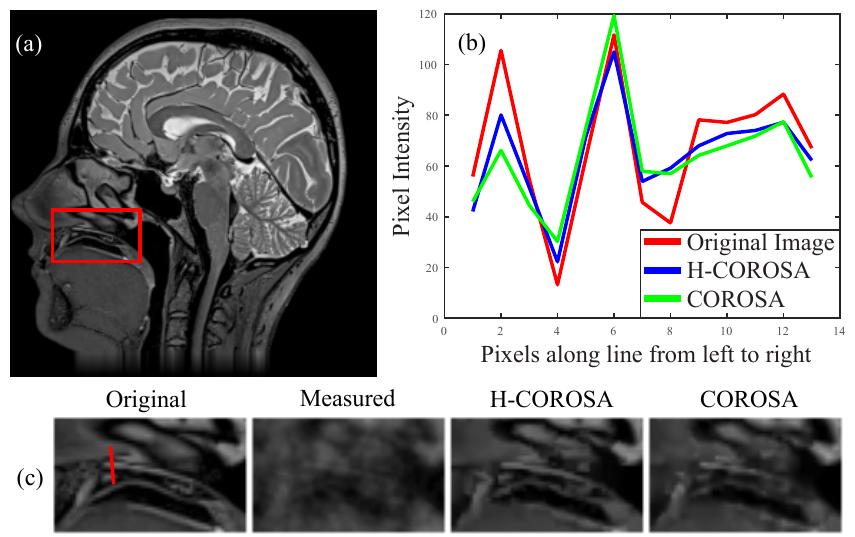}
		%	\vspace{-0.2cm}
		\caption{Reconstruction of $I8$ from 12\% tsp samples: (a) Original 
		$I8$ 
			with selected region (b) Intensity profile along a scanline in the 
			selected 
			region (c) Comparison of selected region from reconstructions along 
			with 
			the scanline location.  Here `measured'  denotes  zero-filled 
			inverse DFT of
			measured samples.}
		\label{fig:mri_reconstruction_mri9_tsp_sr_12_region1}
		%	\vspace{-0.5cm}
	\end{figure}
	
	\subsection{Comparison with deep learning methods}
	
	The goal of this section is to compare the proposed method with deep 
	learning 
	methods.  Deep learning methods
	for image recovery use layers of the so-called convolutional neural 
	networks 
	(CNN), where the term ``deep''
	signifies the stacking of several layers.  Many  ways of using  deep CNN 
	have
	been reported for image recovery.
	Deep CNNs have several thousands of parameters whose values need to be 
	determined by the process known as training.  There are two types of 
	training paradigm reported in the 
	literature:
	
	\noindent
	\underline{Sampling scheme dependent training:}
	In this type,  training requires data containing thousands of pairs,  
	where, 
	each pair contains a
	representative measured image  and the ground truth image that has 
	generated 
	the measurement set.
	From the training data, network parameters are determined  through 
	computational optimization such that
	the network approximately  produces the ground truth image as its output 
	from 
	the corresponding measurement
	as its input.   Clearly,   a trained network   becomes specific to the 
	sampling scheme.  If the sampling scheme
	is changed, the network has to be retrained even if the same class of 
	ground truth 
	images is involved.
	There is a subclass in this category of networks called   
	Generative-Adversarial Network 
	(GAN) architecture, where two 
	CNNs are trained simultaneously: a generator CNN  to generate the 
	reconstruction from noisy 
	under-sampled measurement and a discriminator CNN  to determine the quality 
	of 
	the reconstruction. 
	This results in a combined optimization between the two networks, towards 
	improving the quality of reconstruction. 
	Yang et al applied this scheme for MRI reconstruction \cite{DAGAN},  and 
	they 
	called the resulting network the 
	Deep De-Aliasing Generative Adversarial Networks (DAGAN).  
	
	\noindent
	\underline{Plug and Play  deep learning approach:}  This approach mimics 
	the 
	regularization  method,  where a deep CNN
	plays the role of regularization.  However,  the network does not work 
	directly 
	as   the regularization, but  indirectly
	as a  denoising network. To be more specific,  we first note that  image 
	reconstruction using derivative-based regularization
	is carried out by  splitting-based  numerical optimization methods.  A 
	splitting-based method
	obtains reconstruction by using computational steps involving data fitting 
	and 
	regularization parts independently
	and performs these steps cyclically until some convergence condition is 
	attained.  A crucial  computational 
	step in such schemes is denoising certain intermediate images using  
	regularization. Plug and Play  (PnP)
	deep learning methods   are constructed by replacing this denoising step 
	with
	deep CNN.  Depending on the type
	of splitting-based method used as the base method,  various types of PnP 
	methods 
	are formed. 
	A class of PnP methods  known  as  the ADMM-PnP method was constructed by 
	modifying  a 
	widely used splitting 
	method known as the Alternating Direction Method of Multipliers.   In 
	this 
	class,  
	the most successful one 
	was developed by Ryu et al \cite{PNP}.  We select this method for our 
	comparison.  For brevity,  we simply refer to  this
	method by PnP,  although there are various types of PnP methods.  Note that 
	the 
	training here is not specific to the sampling scheme.
	However, the training is specific to the least-squares difference between 
	zero-filled-in Fourier inversion of
	the measurement and the underlying ground truth image.

	For both types of methods listed above,   we make use of training data 
	used by the authors of DAGAN \cite{MICCAI13_dataset}. We also use the 
	recommended 
	settings/parameters for training DAGAN network, with the code  
	given by the 
	authors \cite{DAGAN_Code}.   In addition,  we also
	include the trained networks  posted by the authors of the PnP 
	method for comparison \cite{PNP_Code}. Only the best score from these 
	networks is updated as the
	reconstruction score for the PnP method. To evaluate  all methods   
	considered 
	in 
	this  
	experiment,  we select 
	six images from the testing set used by the authors of DAGAN.
	These images are shown in Figure 
	\ref{fig:mri_deep_learning_dataset} (T1-T6).  The figure also shows the two 
	sampling schemes that we
	used to simulate the measurements;  the first scheme with 10\% sampling 
	density 
	is shown in 
	the sub-figure R1 and the second with 20\% sampling density is shown in the 
	sub-figure 
	R2.   The sample locations for these schemes were selected randomly.  
	Unlike 
	the TSP sampling
	scheme used in the previous experiment which also has some randomness,   
	the 
	sampling schemes
	used here are not physically realizable.  Nevertheless,  this type of 
	scheme has 
	been  used 
	in compressive sensing literature \cite{MRI_Sparse_Lustig2007sparse}, and 
	hence 
	we use these 
	in order to get a perspective.  The sampling schemes R1 and R2, and the 
	test data 
	images,  T1-T6,
	together give 12 sample sets.  We evaluate the methods under consideration 
	using these sets
	without adding any noise, as well as, with noise added such that 
		the SNR is -0.5dB for 10\% sampling scheme, and  -1.5dB for 20\% 
		sampling 
		scheme with respect to zero filled Fourier inversions. 
	This makes a total of 24 sample sets.  Evaluation scores for 
	reconstructions from all methods 
	obtained using these
	sample sets without noise are given in  Table \ref{table:deep_noiseless}, 
	and the scores obtained using  noisy sample sets are shown in Table 
	\ref{table:deep_noisy}. 
	
	\begin{figure}[ht]
		\centering
		%	\vspace{-0.3cm}
		\includegraphics[width= 
		0.9\textwidth]{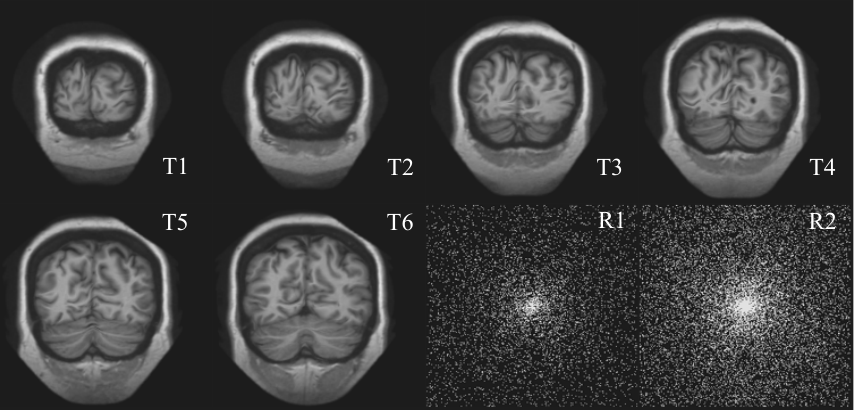}
		\caption{MRI Deep Learning Comparison Dataset: $T1$-$T4$, Random masks 
		$R1$ 
			(10\%) and $R2$ (20\%)}
		\label{fig:mri_deep_learning_dataset}
		%	\vspace{-0.3cm}
	\end{figure}
	
	%\begin{figure}[ht]
	%	\centering
	%	\includegraphics[width= 
	%0.45\textwidth]{Figure9_MRI_Dataset_Deep_Exp2_ACOROSA.png}
	%	\caption{MRI Deeplearning Comparison Reference Images}
	%	\label{fig:mri_deep_references}
	%\end{figure}

	%\begin{figure}[ht]
	%	\centering
	%	\includegraphics[width= 
	%0.45\textwidth]{Figure17_MRI_Measurement_Deep_Random_ACOROSA.png}
	%	\caption{Experiment 2 Measurements: First column gives the 10\% random 
	%sampling trajectory, R1 and corresponding reconstructions with noiseless 
	%and 
	%noisy samples. Second column gives 20\% random sampling trajectory, R2 and 
	%corresponding reconstructions with noiseless and noisy samples.}
	%	\label{fig:mri_measurements_deep}
	%\end{figure} 
	
	Table \ref{table:deep_noiseless} confirms  that, in general,  H-COROSA 
	outperforms all methods 
	in terms of SSIM scores, with the precursor method, COROSA being the 
	closest 
	competitor.
	In terms of SNR scores,   H-COROSA again outperforms other methods in most 
	cases 
	with few
	exceptions where   DAGAN and PNP  have higher scores.  It should be however 
	emphasized
	that SNR measure is insensitive to artifacts containing artificial 
	sharpening,   
	while SSIM measure is  more sensitive to such artifacts and hence more  
	reliable. This means that
	in terms of preserving the structure of the underlying image,   the method 
	proposed in this 
	paper, H-COROSA  is always the best performing one.  It can also be 
	observed that   
	the margins in the scores
	yielded by the proposed method  are  more significant in the case of 10\% 
	sample sets compared
	to the case of 20\% sample sets.  This   means that H-COROSA can be a 
	valuable 
	tool when it
	is required to reduce the sampling density significantly. 
	An example can be seen in Figure 
	\ref{fig:mri_deep_t4_random_sr_1_noiseless}, 
	where the best reconstruction results for $T4$ from 10\% noiseless samples 
	are
	shown. The H-COROSA reconstruction is clearly superior to all other methods 
	visually, with less blurring and no spurious artifacts. On the other hand, 
	PNP 
	reconstructions have higher blurring while DAGAN reconstruction shows 
	artificial 
	structures. It may be noted that in terms of scores for this 
	reconstruction, 
	Table \ref{table:deep_noiseless} shows higher SSIM score for H-COROSA, as 
	confirmed by visual inspection. However, DAGAN and PNP have higher SNR 
	scores, 
	in spite of higher blurring and spurious artifacts as noted previously. The 
	scan lines from the selected regions also show that H-COROSA  follows the 
	original intensity profile more consistently and closely, compared to the 
	other 
	methods. Table \ref{table:deep_noisy}  shows that H-COROSA has higher SSIM 
	scores over other methods in all images with both 10\% and 20\% sampling 
	densities, in the presence of measurement noise. As seen in the case of 
	noiseless samples, SNR scores of H-COROSA are insignificantly  lower than 
	competing methods,  but,  in terms of SSIM 
	scores,  H-COROSA is always better as claimed before. Overall, H-COROSA has 
	the 
	advantage of consistent and improved performance over the deep learning 
	methods, while avoiding the requirements of appropriate training sets which 
	are not always available.  
	
	\begin{table}[htbp]
		\centering
		\caption{MRI Reconstruction: Comparison with deep learning methods for 
			noiseless undersampling}
		% \resizebox{0.24\textwidth}{!}{
		%	\vspace{-0.3cm}
		\renewcommand{\arraystretch}{0.95}
		\begin{tabular}[t]{ccccc}
			\toprule
			\multirow{2}{*}{T1} &\multicolumn{2}{c}{SSIM} 
			&\multicolumn{2}{c}{SNR} 
			\\  &10\% &20\% &10\% &20\%\\
			\midrule
			H-COR. &\textbf{.993} &\textbf{.999} &\textbf{29.51} 
			&\textbf{38.22}\\
			COR. &.991 &\textbf{.999} &28.61 &37.82\\
			DAG. &.961 &.975 &28.65 &31.74\\
			PNP &.970 &.986 &29.18 &33.12\\
			\toprule
			\multirow{2}{*}{T2} &\multicolumn{2}{c}{SSIM} 
			&\multicolumn{2}{c}{SNR} 
			\\  &10\% &20\% &10\% &20\%\\
			\midrule
			H-COR. &\textbf{.992} &\textbf{.998} &\textbf{28.93} 
			&\textbf{37.08}\\
			COR. &.988 &\textbf{.998} &27.21 &36.62\\
			DAG. &.959 &.972 &28.32 &31.19\\
			PNP &.970 &.986 &28.57 &32.88\\
			\toprule
			\multirow{2}{*}{T3} &\multicolumn{2}{c}{SSIM} 
			&\multicolumn{2}{c}{SNR} 
			\\  &10\% &20\% &10\% &20\%\\
			\midrule
			H-COR. &\textbf{.983} &\textbf{.997} &26.37 &\textbf{35.14}\\
			COR. &.978 &\textbf{.997} &25.00 &34.59\\
			DAG. &.951 &.968 &27.41 &29.74\\
			PNP &.972 &.986 &\textbf{27.58} &32.57\\
			\bottomrule
		\end{tabular}
		%	\vspace{-0.3cm}
		\renewcommand{\arraystretch}{0.95}
		\begin{tabular}[t]{ccccc}
			\toprule
			\multirow{2}{*}{T4} &\multicolumn{2}{c}{SSIM} 
			&\multicolumn{2}{c}{SNR} 
			\\  &10\% &20\% &10\% &20\%\\
			\midrule
			H-COR. &\textbf{.978} &\textbf{.996} &25.01 &\textbf{33.67}\\
			COR. &.971 &\textbf{.996} &23.14 &32.94\\
			DAG. &.946 &.962 &\textbf{27.20} &29.27\\
			PNP &.973 &.986 &26.61 &32.02\\
			\toprule
			\multirow{2}{*}{T5} &\multicolumn{2}{c}{SSIM} 
			&\multicolumn{2}{c}{SNR} 
			\\  &10\% &20\% &10\% &20\%\\
			\midrule
			H-COR. &\textbf{.970} &\textbf{.995} &23.91 &\textbf{32.66}\\
			COR. &.962 &.994 &22.63 &31.94\\
			DAG. &.931 &.959 &\textbf{25.61} &28.96\\
			PNP &.965 &.985 &25.46 &31.42\\
			\toprule
			\multirow{2}{*}{T6} &\multicolumn{2}{c}{SSIM} 
			&\multicolumn{2}{c}{SNR} 
			\\  &10\% &20\% &10\% &20\%\\
			\midrule
			H-COR. &\textbf{.968} &\textbf{.994} &23.81 &\textbf{31.86}\\
			COR. &.956 &.993 &22.03 &31.05\\
			DAG. &.925 &.955 &\textbf{24.62} &28.48\\
			PNP &.962 &.985 &24.41 &31.14\\
			\bottomrule
		\end{tabular}
		\label{table:deep_noiseless}
		%	\vspace{-0.3cm}
	\end{table}
	
	\begin{figure}[htbp]
		\centering
		%	\vspace{-0.4cm}
		\includegraphics[width= 
		0.98\textwidth]{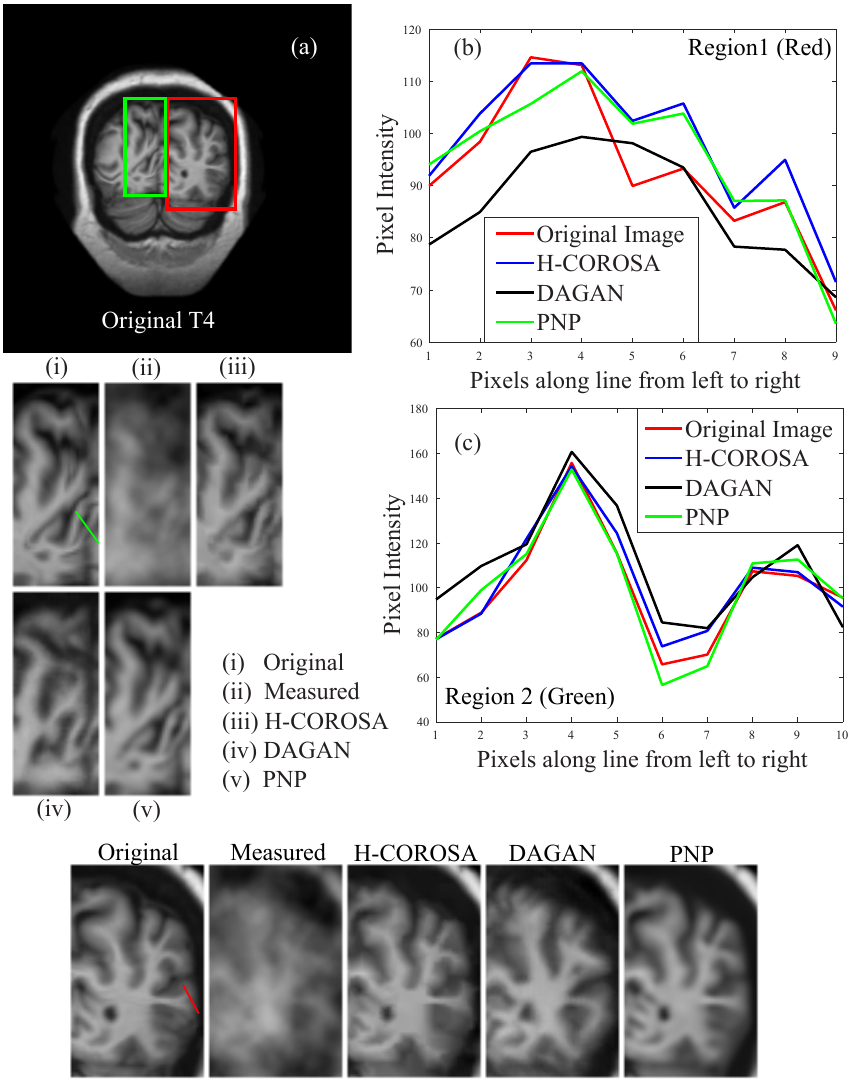}
		\caption{Reconstruction of $T4$ from 10\% random noiseless samples with 
			selected regions, scanlines and corresponding intensity profiles.
			Here `measured'  denotes  zero-filled inverse DFT of
			measured samples.}
		\label{fig:mri_deep_t4_random_sr_1_noiseless}
		%	\vspace{-0.5cm}
	\end{figure}
	
	\begin{table}[htbp]
		\centering
		\caption{MRI Reconstruction: Comparison with deep learning methods for 
			noisy samples}
		% \resizebox{0.24\textwidth}{!}{
		%	\vspace{-0.3cm}
		\renewcommand{\arraystretch}{0.95}
		\begin{tabular}[t]{ccccc}
			\toprule
			\multirow{2}{*}{T1} &\multicolumn{2}{c}{SSIM} 
			&\multicolumn{2}{c}{SNR} 
			\\  &10\% &20\% &10\% &20\%\\
			\midrule
			H-COR. &\textbf{.924} &\textbf{.960} &19.93 &23.81\\
			COR. &.919 &.953 &19.73 &22.77\\
			DAG. &.802 &.872 &18.71 &24.19\\
			PNP &.807 &.915 &\textbf{20.15} &\textbf{24.50}\\
			\toprule
			\multirow{2}{*}{T2} &\multicolumn{2}{c}{SSIM} 
			&\multicolumn{2}{c}{SNR} 
			\\  &10\% &20\% &10\% &20\%\\
			\midrule
			H-COR. &\textbf{.917} &\textbf{.958} &19.62 &23.76\\
			COR. &.906 &.949 &19.26 &22.66\\
			DAG. &.804 &.877 &18.96 &24.15\\
			PNP &.777 &.918 &\textbf{19.87} &\textbf{24.55}\\
			\toprule
			\multirow{2}{*}{T3} &\multicolumn{2}{c}{SSIM} 
			&\multicolumn{2}{c}{SNR} 
			\\  &10\% &20\% &10\% &20\%\\
			\midrule
			H-COR. &\textbf{.915} &\textbf{.942} &19.22 &22.65\\
			COR. &.902 &.935 &18.75 &22.10\\
			DAG. &.783 &.769 &17.89 &21.61\\
			PNP &.783 &.875 &\textbf{19.45} &\textbf{23.44}\\
			\bottomrule
		\end{tabular}
		%	\vspace{-0.3cm}
		\renewcommand{\arraystretch}{0.95}
		\begin{tabular}[t]{ccccc}
			\toprule
			\multirow{2}{*}{T4} &\multicolumn{2}{c}{SSIM} 
			&\multicolumn{2}{c}{SNR} 
			\\  &10\% &20\% &10\% &20\%\\
			\midrule
			H-COR. &\textbf{.904} &\textbf{.937} &18.17 &22.02\\
			COR. &.898 &.928 &18.30 &21.59\\
			DAG. &.777 &.666 &17.52 &18.70\\
			PNP &.764 &.862 &\textbf{18.60} &\textbf{22.38}\\
			\toprule
			\multirow{2}{*}{T5} &\multicolumn{2}{c}{SSIM} 
			&\multicolumn{2}{c}{SNR} 
			\\  &10\% &20\% &10\% &20\%\\
			\midrule
			H-COR. &\textbf{.901} &\textbf{.931} &18.46 &21.47\\
			COR. &.887 &.921 &17.91 &21.25\\
			DAG. &.757 &.696 &16.34 &19.42\\
			PNP &.795 &.847 &\textbf{18.68} &\textbf{22.06}\\
			\toprule
			\multirow{2}{*}{T6} &\multicolumn{2}{c}{SSIM} 
			&\multicolumn{2}{c}{SNR} 
			\\  &10\% &20\% &10\% &20\%\\
			\midrule
			H-COR. &\textbf{.891} &\textbf{.934} &18.11 &21.73\\
			COR. &.877 &.922 &17.47 &21.24\\
			DAG. &.728 &.685 &15.27 &18.59\\
			PNP &.748 &.865 &\textbf{18.32} &\textbf{22.31}\\
			\bottomrule
		\end{tabular}
		\label{table:deep_noisy}
		%	\vspace{-0.3cm}
	\end{table}
	
	\section{Discussion and Conclusion}

	In any image reconstruction problem,  including the problem of 
	reconstructing
	from Fourier samples,  enforcing a prior model in the characteristics of the
	underlying image is inevitable. The quality of reconstructed images
	is largely determined by the suitability of the prior.  Two classes of image
	reconstruction methods that are dominant in the current literature are 
	the total variation (TV) regularization methods and deep learning methods.  
	TV
	regularization 
	methods rely on hand-crafted prior models that are often proven to be
	ad hoc leading to distortions in the reconstructed images.   Deep learning
	methods  are believed to be less distorting because they enforce natural
	prior derived from training, but, their need for  training data is their 
	disadvantage.
	Besides, concerns about some stability  issues have also been  raised  
		regarding these
	methods \cite{Deep_Instability}.

	The proposed regularization method, H-COROSA,  which is an extension of  
	the TV 
	method,  
	does not give distortions under the conditions of severe under-sampling and 
	severe noise.
	It outperforms all known variants of total variation methods and well-known 
	deep learning
	methods.  H-COROSA can yield good quality reconstruction from highly 
	under-sampled data.  Hence it can be useful to increase the 
	frame rate of MR imaging
	by setting the imaging system to scan   a small fraction in  the Fourier 
	plane.  
	Further,  H-COROSA
	will allow improving frame rate in a wider range of studies compared to 
	deep 
	learning methods,
	because H-COROSA does not need training data. The key factor that led to 
	the 
	success of
	H-COROSA is the spatially adaptive weighting of different types of imaging 
	derivatives.  This leads to a
	reduction in distortions that  normally occur in the non-adaptive 
	regularizations utilizing
	derivatives.  The key enabling factor for this adaptivity is the  
	multiresolution scheme where
	the information extracted from the lower resolution reconstruction helps to 
	improve the quality
	of reconstruction in the next higher resolution level.
	
	\appendix
	
	\section{Overview of Derivative Based Regularization methods}
	
	The earliest form of regularization    used for solving image 
	reconstruction 
	problems is called
	the Tikhonov regularization \cite{Tikhonov}.  Tikhonov regularized 
	reconstruction can be
	expressed as
	\begin{equation}
		\begin{gathered}
			\label{eq:tikh1}
			s_{opt} = \underset{s}{\operatorname{argmin}}\;\;F(s,h,m) + \lambda
			\sum_{\bf r} \|({\bf d}_1*s)({\bf r}) \|_2^2  \\
			\mbox{with}\;\; d_{1}(\r) = [d_{x}(\r), d_{y}(\r)]^{T},
		\end{gathered}
	\end{equation}
	where $d_{x}(\r)$ and $d_{y}(\r)$ are filters implementing first-order 
	derivatives,
	$\frac{\partial}{\partial x}$, and $\frac{\partial}{\partial y}$ 
	respectively.  
	The above computational
	problem is easy to solve,  but, this formulation leads to a severe loss of 
	resolution   under the noise
	levels ranging from moderate to high.   As an improvement, the first-order 
	total variation 
	(TV1) \cite{TV1_Rudin} method was proposed, 
	which can be expressed as
	\begin{equation}
		\begin{gathered}
			\label{eq:tv1}
			s_{opt} = \underset{s}{\operatorname{argmin}}\;\;F(s,h,m) + \lambda
			\sum_{\bf r} \|({\bf d}_1*s)({\bf r}) \|_2 
		\end{gathered}
	\end{equation}
	The removal of squaring in the regularization leads to 
	a better resolution and TV has been widely used 
	\cite{TV_App_BlindDecon, 
		TV_App_Microwave, TV_App_Seismic, TV_App_Wavelet, TV_App_Wlinpaint}
	because of  its ability to recover sharp image features in the
	presence of noise and in the cases of under-sampling. 
	While TV1 is able to retain edges \cite{TV_Edge} in the reconstruction as 
	compared to 
	Tikhonov regularization \cite{Tikhonov}, it presents drawbacks such as
	staircase artifacts  \cite{TV_Ring_Stair, TV_Stair_Jalalzai}.   
	Higher-order extensions of TV  \cite{TV_Chambolle_97, Scherzer_tv2_98, 
		TV_Chan_2000, HS_13} have been proposed to avoid staircase artifacts 
		and 
	they
	deliver better restoration, albeit at the cost of increased computations. 
	Second-order TV (TV2)  \cite{Scherzer_tv2_98} restoration was proposed as 
	\begin{equation}
		\begin{gathered}
			\label{eq:tv2}
			s_{opt} = \underset{s}{\operatorname{argmin}}\;\;F(s,h,m) + \lambda
			\underbrace{\sum_{\bf r}\|({\bf d}_2*s)({\bf r})\|_2}_{R_2(s)}, \\
			\mbox{with}\;\; {\bf d}_2({\bf r}) = [d_{xx}({\bf 
			r})\;\;d_{yy}({\bf 
				r})\;\;\sqrt{2}d_{xy}({\bf r})]^T,
		\end{gathered}
	\end{equation}
	where $d_{xx}(\r), d_{yy}(\r)$, and $d_{xy}(\r)$ are discrete filters 
	implementing second-order derivatives 
	$\frac{\partial^2}{\partial x^2}$, $\frac{\partial^2}{\partial y^2}$ and 
	$\frac{\partial^2}{\partial x\partial y}$ 
	respectively. Another second-order derivative-based formulation is 
	Hessian-Schatten (HS) norm regularization  \cite{HS_13}, which has been 
	proposed as a generalization
	of the standard TV2 regularization.  It is constructed as an $\ell_p$ norm 
	of 
	Eigenvalues of the Hessian matrix, which
	becomes the standard TV2 for $p=2$.  HS norm with $p=1$ has been proven to 
	yield the best resolution
	in the reconstruction  since this better preserves Eigenvalues   of the 
	Hessian  \cite{HS_13}.
	Let  ${\bf H}_2({\bf r})$ be  the matrix filter  composed of   $d_{xx}({\bf 
		r})$, $d_{yy}({\bf r})$, 
	and $d_{xy}({\bf r})$  and let ${\pmb \zeta}(\cdot)$ be the operator that 
	returns the vector containing
	the Eigenvalues of its matrix argument. Then HS norm regularization of 
	order 
	$p$ is expressed as
	\begin{equation}
		\label{eq:hs_reg}
		s_{opt} = \underset{s}{\operatorname{argmin}}\;\;F(s,h,m) + \lambda
		\sum_{\bf r}\left\|{\pmb \zeta}(({\bf H}_2*s)({\bf r}))\right\|_p.
	\end{equation}
	Since the Eigenvalues are actually directional second derivatives taken 
	along 
	principle directions, setting $p=1$
	better preserves the local image structure. It has to be noted that the 
	costs 
	given in the equations  (\ref{eq:tv1}), and (\ref{eq:tv2})   are often 
	minimized using 
	gradient-based approaches with smooth approximations of the form $R_k(s) = 
	{\sum \limits_{\r}} \sqrt{\epsilon + 
		||d_{k}(\r) \ast s(\r)||_{2}^2}, \ k=1,2$ where $\epsilon$ is a small 
	positive 
	constant   \cite{TV1_Rudin, vogel_tv1, TV_Smooth_bect20041, 
		TV_Smooth_bioucas2007new}.
	This approach  has been proven to converge to the minimum of the exact form 
	as  
	$\epsilon\rightarrow 0$
	\cite{vogel_tv1}.  Approaches to minimize the cost without smooth 
	approximation include the primal-dual
	method  \cite{TV_Dual_Chambolle},  and  alternating direction of multiplier 
	method (ADMM)   \cite{tvadmm, TV_ADMM_Afonso, 
		TV_ADMM_Steidl, TV_ADMM_Manu}.
	A detailed comparison of such approaches  has been provided in  
	\cite{TV_Aujol}. 
	
	It has been demonstrated that combining first- and second-order
	derivatives is advantageous in accurately restoring image features  
	\cite{TV_CombinedLysaker, Combined_order_TV, TGV2, sam-tv}. 
	In this regard, the combined order TV  \cite{Combined_order_TV} 
	uses scalar relative weights for combining first- and second-order 
	variations, 
	with the relative weights left as user parameters, and the solution is 
	estimated 
	by means of the optimization problem of
	the form 
	\begin{equation}
		s_{opt} = \underset{s}{\operatorname{argmin}}\;\;F(s,h,m)
		+   \alpha_1 R_1(s) +   \alpha_2 R_2(s),
	\end{equation} 
	where $\alpha_1$  and $\alpha_2$ determine the relative weight.
	A generalization for total variation to higher-order terms, named  
	total generalized variation (TGV)  has also been proposed \cite{TGV, 
	TGV2}.  
	It is 
	generalized in the following ways:  it is formulated for any
	general derivative order,  and for any given order,  it is generalized in 
	the 
	way
	how the derivatives are penalized.  Only the second-order TGV form 
	\cite{TGV2} 
	has been
	well explored for image reconstruction, which takes the following form: 
	\begin{equation}
		(s_{opt}, {\bf p}_{opt})  = \underset{s, {\bf 
				p}}{\operatorname{argmin}}\;\;F(s,h,m)  + 
		\alpha_1\sum_{\bf r} \|({\bf d}_1*s)({\bf r})-{\bf p}({\bf r})\|_2 
		+ \alpha_2 
		\frac{1}{2} \sum_{\bf r}  \|{\bf d}_1({\bf r})*{\bf p}^T({\bf r})+{\bf 
			p}({\bf r})*{\bf d}_1^T({\bf r})\|_F,
	\end{equation}
	where ${\bf p}({\bf r})$ is an auxiliary $2\times 1$ vector image.
	The TGV functional is able to spatially adapt to the underlying image 
	structure 
	because of the minimization w.r.t.
	auxiliary variable
	${\bf p}$. Near edges, ${\bf p}({\bf r})$ approaches zero leading to 
	TV1-like 
	behavior which allows sharp jumps in
	the edges.  On the other hand,  in smooth regions,  ${\bf p}({\bf r})$ 
	approaches  ${\bf d}_1*s({\bf r})$ leading
	to TV2-like behavior which will avoid staircase artifacts. However, the 
	drawback with TGV
	functional is that   the weights  $\alpha_1$  and $\alpha_2$ have to be 
	chosen 
	by the user.

	\section{Computation of discrete derivatives}
	
	For discrete image $s(\r)$,  discrete implementation  of derivative 
	operators
	$\frac{\partial}{\partial x}$,  $\frac{\partial}{\partial y}$,  
	$\frac{\partial^2}{\partial x^2}$,
	$ \frac{\partial^2}{\partial y^2}$, and  $\frac{\partial^2}{\partial 
	x\partial 
		y}$
	are equivalent to filtering by the filters  $d_x$,  $d_y$,  $d_{xx}$, 
	$d_{yy}$, 
	and
	$d_{xy}$.  With $\r = (x,y)$,
	filtering with  these filters can be expressed as:
	\begin{align}
		(d_x*s)(\r)  & = s(x,y) -  s(x-1,y),  \\
		(d_y*s)(\r)  & = s(x,y) -  s(x,y-1),  \\
		(d_{xx}*s)(\r) & =  -2s(x,y) +   s(x-1,y) - s(x+1,y),   \\
		(d_{yy}*s)(\r)  & =  -2s(x,y) +   s(x,y-1) - s(x,y+1),   \\
		(d_{xy}*s)(\r)  & =   s(x,y) + s(x-1,y-1) -  s(x-1,y) -  s(x,y-1),
	\end{align}
	
	\section{Regularization for adaptive weight computation}
	
	Given a guide image,  $\hat{s}$,   the required adaptive weights are 
	supposed 
	to be the
	minimizer of  $R(\hat{s},{\beta}_{1},  {\beta}_{2},  {\beta}_{2})$  with 
	respect to weight
	images  ${\beta}_{1}$,  ${\beta}_{2}$,  and ${\beta}_{2}$.  Recall that 
	$R(\hat{s},{\beta}_{1},  {\beta}_{2},  {\beta}_{2})$   is given by 
	\begin{equation}
		%\label{eq:regsa}
		R(\hat{s},{\beta}_{1},  {\beta}_{2}, 
		{\beta}_{2})   = \sum_{\r} \left(  
		{\beta}_1(\r)\|\nabla \hat{s}({\bf r})\|_2  +   {\beta}_2(\r) |{\cal 
			H}^+\hat{s}({\bf r})|   
		+  {\beta}_3(\r) |{\cal H}^-\hat{s}({\bf r})|
		\right) 
	\end{equation}
	Minimizing $R(\hat{s},{\beta}_{1},  {\beta}_{2},  {\beta}_{2})$  w.r.t.  
	${\beta}_{1}$,  ${\beta}_{2}$,  and ${\beta}_{2}$
	subject to  the conditions $0\le \beta_i(\r) \le 1$ and $\sum_{i=1}^3 
	\beta_i(\r) =1$ will give a solution in the following
	form:  for each $\r$,  one of $\beta_i(\r)$s will be $1$ and others will be 
	zeros.  The value of $i$ for which
	$\beta_i(\r)$  is $1$ will be the index of the element in the list 
	$\{\|\nabla 
	\hat{s}({\bf r})\|_2, 
	|{\cal H}^+\hat{s}({\bf r})|,   |{\cal H}^-\hat{s}({\bf r})|\}$ that has 
	the 
	lowest value.  This means the
	adaptive weights obtained as the minimum of $R(\hat{s},{\beta}_{1},  
	{\beta}_{2},  {\beta}_{2})$  will be rapidly
	changing between 0 and 1  as move along the pixel indices to cope up with
	even insignificant differences  among the terms
	$\{\|\nabla  \hat{s}({\bf r})\|_2$,    $|{\cal H}^+\hat{s}({\bf r})|$,  
	and   $|{\cal H}^-\hat{s}({\bf r})|\}$.
	This will lead to 
	artifacts  in the 
	reconstruction.  Due to this reason,  we determine the adaptive weights by 
	minimizing $R(\hat{s},{\beta}_{1},  {\beta}_{2}, {\beta}_{2}) + 
	L(\beta_1,\beta_2,\beta_2, \tau)$
	where $L(\beta_1,\beta_2,\beta_3,\tau) = -\tau\sum_\r 
	log(\beta_1(\r)\beta_2(\r)\beta_3(\r))$.  
	We use the term $L(\beta_1,\beta_2, \beta_3, \tau)$ to  prevent   spurious 
	changes in 
	the values of the weights 
	$\beta_1(\r)$,  $\beta_2(\r)$,  and   $\beta_3(\r)$.
	Any of the  weights $\beta_1(\r)$,  $\beta_2(\r)$,  and   $\beta_3(\r)$ 
	will approach zero
	only if the corresponding  entry  in  the list is 
	$\{\|\nabla 
	\hat{s}({\bf r})\|_2, 
	|{\cal H}^+\hat{s}({\bf r})|,   |{\cal H}^-\hat{s}({\bf r})|\}$ is 
	significantly lower   than
	the other entries.  Further,  if any of the  weights $\beta_1(\r)$,  
	$\beta_2(\r)$,  and   $\beta_3(\r)$ 
	becomes zero, the penalty  goes to $\infty$. This ensures that all terms 
	are 
	involved at every image pixel  leading to robustness against noise.
	The role of $\tau$ is to control the  sensitivity of variations in the 
	weight  
	$\beta_1(\r)$,  $\beta_2(\r)$,  and   $\beta_3(\r)$   in response to the 
	variations in the terms
	$\|\nabla 
	{s}({\bf r})\|_2$, $ {\cal H}^+s({\bf r})$, and ${\cal H}^-s({\bf r})$. A 
	low 
	value of $\tau$ imposes less penalty through $L(\beta_1,\beta_2, \beta_3, 
	\tau)$ and hence, will promote a rapid  variation in the weights 
	$\beta_1(\r)$,  $\beta_2(\r)$,  and   $\beta_3(\r)$. Further, it will also 
	promote
	the weights going close to zero.
	On the 
	other hand, a high value of $\tau$ will prevent  
	rapid  variation in the weights 
	$\beta_1(\r)$,  $\beta_2(\r)$,  and   $\beta_3(\r)$;   further,  it  will 
	also prevent the weights
	from going close to zero,   and 
	thus will make  the regularization  to use all three terms in a more 
	balanced way.  
	Hence, if $\tau$ is too high,  the structural adaptivity of the 
	regularization will be lost.
	This means that a balanced choice of  value  for 
	$\tau$ crucial.  Fortunately, we observed
	experimentally that a nominal value for $\tau$   that works well 
	for various types of images and noise levels can be chosen.

	\section{Notations and conventions for algorithm description}
	\label{sec:notcon}
	
	For a 2D discrete filter,  $g(\r)$ where $\r \in {\mathbb N}^2$,  $g^T(\r)$ 
	denotes the flipped
	filter, i.e.,  $g^T(\r)=g(-\r)$.  The discrete convolution of the filter 
	$g(\r)$ with an image $x(\r)$, 
	denoted by $(g*x)(\r)$, is given by $(g*x)(\r) = \sum_{\t}x(\t)g(\r-\t)$.  
	With 
	this,  the
	flipped convolution will be $(g^T*x)(\r) = \sum_{\t}x(\t)g(\r+\t)$.  For a 
	vector discrete filter,
	$\g(\r) = [g_1(\r)\;g_2(\r)]^T$,  where each $g_i$ is a discrete filter,  
	and 
	for an image $x(\r)$,
	the notation $(\g*x)(\r)$ denotes $[(g_1*x)(\r)\;(g_2*x)(\r)]^T$, which is 
	a 
	vector image.  For a
	vector image,  $\y(\r) = [y_1(\r)\;y_2(\r)]^T$,  and a vector filter,  
	$\g(\r) 
	= [g_1(\r)\;g_2(\r)]^T$, 
	the notation $(\g^T*\y)(\r)$  denotes  $(g^T_1*y_1)(\r) + 
	(g^T_2*y_2)(\r)$.  
	For scalar images,
	$x(\r)$ and $y(\r)$,  $(xy)(\r)$ denotes the point-wise product, i.e.,  
	$(xy)(\r)=x(\r)y(\r)$.
	For a vector image  $\y(\r) = [y_1(\r)\;y_2(\r)]^T$ and a scalar image 
	$z(\r)$,  the Kronocker 
	product,  $(z\oplus \y)(\r)$, is  given by   $(z\oplus \y)(\r) =   
	[(zy_1)(\r)\;(zy_2)(\r)]^T$. 
	For $3\times 1$ vector filters and vector images,  these definitions are 
	extended in a similar way.

	For a pair of images  $x(\r)$  and $y(\r)$,  we define the inner product as
	$\langle x,y \rangle = \sum_{\r} x(\r)y(\r).$
	The norm $\|x\|_2$ is given by
	$\|x\|_2= \sqrt{\langle x,x \rangle}$.  For vector images,  
	$\x(\r)$  and $\y(\r)$, the inner product, $\langle \x(\r), \y(\r) \rangle$
	denotes  $\x^T(\r)\y(\r)$,   and $\|\x(\r)\|_{2}$ denotes 
	$\sqrt{\x^T(\r)\x(\r)}$.
	On the other hand,  writing the same expressions without the pixel index
	will indicate the summation over pixel indices.  In other words, we
	have that $\langle \x, \y \rangle  = \sum_{\r} \x^T(\r)\y(\r),$  and 
	$\|\x\|_2= \sqrt{\langle \x,\x \rangle}$.   
	The notation $\|\x\|_{\cdot2}$  denotes a scalar  image   that is a 
	point-wise 
	norm
	of the vector image $\x(\r)$, i.e.,  
	$\|\x\|_{\cdot2}(\r) =  \sqrt{\x^T(\r)\x(\r)}$.   With this, 
	$\langle x, \|\y\|_{\cdot2}\rangle$ will denote the sum 
	$ \sum_{\r} x(\r)\|\y\|_{\cdot2}(\r)$. 
	
	We will   use  operators denoting the computation of Eigenvalues of 
	a $2\times 2$
	symmetric  matrices.    Let $\mathbb{S}^{2\times 2}$ denote the space of 
	$2\times 2$
	symmetric  matrices.  Let ${\bf v}= \genm{v_1}{v_2}{v_3}$.   We will use 
	the 
	notations
	${\pmb \Lambda}_1({\bf v})$ and ${\pmb \Lambda}_2({\bf v})$  for denoting 
	the 
	Eigenvalues
	$ {\bf v}$.  In other words, 
	$${\pmb \Lambda}_1({\bf v}) = 
	0.5\left(v_1+v_2+\sqrt{(v_1-v_2)^2+4v_3}\right)$$
	$${\pmb \Lambda}_2({\bf v}) = 
	0.5\left(v_1+v_2-\sqrt{(v_1-v_2)^2+4v_3}\right)$$
	% Let ${\bf \Pi}^{+}{\bf v}$
	%and  ${\bf \Pi}^{-}{\bf v}$  denote Eigen vector corresponding to the 
	%Eigen 
	%values 
	% ${\pmb \Lambda}^{+}{\bf v}$  and  ${\pmb \Lambda}^{-}{\bf v}$  
	%respectively.   
	For a $2\times 2$ symmetric matrix of images,   $\y$,  ${\pmb 
	\Lambda}_1\y$  
	and
	${\pmb \Lambda}_2\y$ denote the images of such Eigenvalues.
	For a  $2\times 2$ symmetric  matrix ${\bf v}$,  its Eigenvectors will be 
	of 
	the form 
	$[\cos(\theta)\;\sin(\theta)]^T$,  
	and  $[-\sin(\theta)\;\cos(\theta)]^T$, where $\theta$ is the angle that 
	depends on 
	${\bf v}$.
	Let $\Theta({\bf v})$ denote the operator that returns this angle.  Let
	${\cal E}(\cdot,\cdot,\cdot)$ denote the operator that reconstructs the 
	$2\times 2$
	symmetric matrix from the Eigenvalues and the angle determining the 
	Eigenvectors
	such that
	${\bf v} = {\cal E}({\pmb \Lambda}_1({\bf v}), {\pmb \Lambda}_2({\bf v}), 
	\Theta({\bf v}))$. 
	
	\section{Adaptive regularized reconstruction with given spatial weights}
	\label{sec:adreg}

	The goal here is to develop the numerical algorithm for the minimizations
	represented by  ${\mathlarger{\bar{\cal M}_{ad}}^{(j)}}(\cdot)$ and   
	${\mathlarger{{\cal M}_{ad}}}(\cdot)$ in the equations
	\eqref{eq:recph12mrloop}  and \eqref{eq:recph12iter}.   Note that  
	${\mathlarger{{\cal M}_{ad}}}(\cdot)$ is a special case of 
	${\mathlarger{\bar{\cal 
				M}_{ad}}^{(j)}}(\cdot)$.
	Specifically, we have ${\mathlarger{\bar{\cal 
				M}_{ad}}^{(0)}}(\cdot)={\mathlarger{{\cal M}_{ad}}}(\cdot)$.
	Hence, it is sufficient to consider  ${\mathlarger{\bar{\cal 
				M}_{ad}}^{(j)}}(\cdot)$. 
	The operation represented by  ${\mathlarger{\bar{\cal 
	M}_{ad}}^{(j)}}(\cdot)$
	is equivalent to solving the following
	minimization problem:
	\begin{equation}
		\label{eq:reccons}
		s^{(j)}   = \underset{\makecell{s=E^{(j)}\bar{s}\\ \bar{s} \in 
				\mathbb{R}^{\frac{N}{2^j}\times \frac{N}{2^j}}} 
				}{\operatorname{argmin}} \;
		F(s,{\bf m}, T)  + 
		\lambda R(s,\hat{\beta}_{1},  \hat{\beta}_{2}, 
		\hat{\beta}_{2})  + {\cal B}(s)
	\end{equation}
	where 
	\begin{align}
		R(s,\hat{\beta}_{1},  \hat{\beta}_{2}, 
		\hat{\beta}_{2})  & = \sum_{\r} \left(  
		\hat{\beta}_1(\r)\|\nabla s({\bf r})\|_2  + \hat{\beta}_2(\r) |{\cal 
			H}^+s({\bf r})|   +  \hat{\beta}_3(\r) |{\cal H}^-s({\bf r})|
		\right),  \\
		F(s,{\bf m}, T)  & = \| Ts-{\bf m} \|_2^2,
	\end{align}
	and  ${\cal B}(s)$ is box constraint function which takes $\infty$ if any 
	of the pixels of $s$ has a value outside
	a given range,  and  takes zero otherwise.  Here the subscript lines under 
	the 
	argmin,  
	$s=E^{(j)}\bar{s}$ and $\bar{s} \in \mathbb{R}^{\frac{N}{2^j}\times 
		\frac{N}{2^j}}$  means that
	the optimization variable $s$ is constrained to be of the form 
	$s=E^{(j)}\bar{s}$ where $E^{(j)}$ is a $2^j$-fold
	interpolator and $\bar{s}$ is an image of size $\frac{N}{2^j}\times 
	\frac{N}{2^j}$.  Here the operator
	${\cal H}^+s({\bf r})$ denotes computation of discrete  derivatives 
	$\frac{\partial^2}{\partial x^2}$, $\frac{\partial^2}{\partial y^2}$ and 
	$\frac{\partial^2}{\partial x\partial y}$ 
	and then the computation of one of the Eigenvalues of the Hessian.    
	Similarly, ${\cal H}^-s({\bf r})$  denotes
	the computation of  the same set of derivatives and then the computation of 
	the 
	other Eigenvalue.  To facilitate the
	development of   the minimization algorithm,  we have to split these 
	notations into 
	two stages.  To this end, let
	$d_{xx}(\r), d_{yy}(\r)$, and $d_{xy}(\r)$ be  discrete filters implementing
	second-order derivatives  $\frac{\partial^2}{\partial x^2}$, 
	$\frac{\partial^2}
	{\partial y^2}$ and $\frac{\partial^2}{\partial x\partial y}$  
	respectively, 
	and 
	let ${\bf h}({\bf r}) = \genm{d_{xx}({\bf r})}{d_{yy}({\bf r})}{d_{xy}({\bf 
			r})}$.  
	% Next,   for a vector ${\bf v} = [v_1\;v_2\; v_3]$,
	% let ${\pmb \Lambda}^{+}{\bf v}$ and ${\pmb \Lambda}^{-}{\bf v}$ be 
	%defined as
	% \begin{align}
	% \label{eq:veig1}
	% {\pmb \Lambda}^{+}{\bf v}  & = 
	%0.5\left(v_1+v_2+\sqrt{(v_1-v_2)^2+4v_3}\right)  \\
	%{\pmb \Lambda}^{-}{\bf v}   & = 
	%0.5\left(v_1+v_2-\sqrt{(v_1-v_2)^2+4v_3}\right)
	%\end{align}
	Then $({\bf h}*s)(\r)$ is $2\times 2$ matrix of images. 
	Then we have
	$${\cal H}^+s({\bf r}) =   {\pmb \Lambda}_1(({\bf h}*s)(\r)),$$
	$${\cal H}^-s({\bf r}) =   {\pmb \Lambda}_2(({\bf h}*s)(\r)),$$
	where ${\pmb \Lambda}_1(\cdot)$  and ${\pmb \Lambda}_2(\cdot)$  are the 
	operators
	defined in Section \ref{sec:notcon}.
	Next,  let
	${\bf g}(\r) = [d_{x}(\r), d_{y}(\r)]^{T}$
	where $d_{x}(\r)$ and $d_{y}(\r)$ are filters implementing first order 
	derivatives,
	$\frac{\partial}{\partial x}$, and $\frac{\partial}{\partial y}$ 
	respectively.  
	Then the adaptive
	regularization,  $R(s,\hat{\beta}_{1},  \hat{\beta}_{2}, \hat{\beta}_{3})$ 
	can 
	be expressed
	as
	\begin{equation}
		R(s,\hat{\beta}_{1},  \hat{\beta}_{2}, 
		\hat{\beta}_{3})  = \sum_{\r} \left(  
		\hat{\beta}_1(\r)\|({\bf g}*s)({\bf r})\|_2  + \hat{\beta}_2(\r) |{\pmb 
			\Lambda}_1(({\bf h}*s)(\r))|   
		+  \hat{\beta}_3(\r) |{\pmb \Lambda}_2(({\bf h}*s)(\r))| \right).
	\end{equation}
	Next, the constrained optimization problem of equation (\ref{eq:reccons}) 
	can 
	be written
	in unconstrained form by absorbing the interpolator $E^{(j)}$ into the cost 
	function.  Specifically,
	the cost can be written as
	\begin{equation}
		\label{eq:recuc}
		\bar{s}^{(j)}   = \underset{\makecell{ \bar{s} \in 
				\mathbb{R}^{\frac{N}{2^j}\times \frac{N}{2^j}}} 
				}{\operatorname{argmin}} \;
		F(E^{(j)}\bar{s},{\bf m}, T) + 
		\lambda R(E^{(j)}\bar{s},\hat{\beta}_{1},  \hat{\beta}_{2}, 
		\hat{\beta}_{2})  + {\cal B}(E^{(j)}\bar{s}).
	\end{equation}
	Here the notations $F(E^{(j)}\bar{s},{\bf m}, T)$,   
	$R(E^{(j)}\bar{s},\hat{\beta}_{1},  \hat{\beta}_{2}, \hat{\beta}_{3})$,
	and ${\cal B}(E^{(j)}\bar{s})$ have straightforward interpretation.  They 
	represent functions obtained
	by replacing  variable $s$ with the $2^j$-fold interpolated image  
	$E^{(j)}\bar{s}$.  We will also use the
	notations  $F^{(j)}(\bar{s},{\bf m}, T)$,   
	$R^{(j)}(\bar{s},\hat{\beta}_{1},  
	\hat{\beta}_{2}, \hat{\beta}_{3})$,
	and ${\cal B}^{(j)}(\bar{s})$ to represent these functions.    Note that, 
	when 
	$j=0$,  there is no interpolation,
	and hence $E^{(0)}$ denote the identity operator (the operator that returns 
	input as the output). 
	This means that  variable $\bar{s}$ is of original size ($N\times N$).   
	Let $\mathlarger{{\cal M}_{ad}}^{(j)}\big(\cdot\big)$  denote the operator 
	that
	is equivalent to the modified minimization given in the equation 
	\eqref{eq:recuc}.
	It can be written as 
	\begin{align}
		\label{eq:recph1mr}
		\bar{s}^{(j)} =   &\mathlarger{{\cal M}_{ad}}^{(j)}\Big(
		\scalebox{.7}[1.0]{\bf OptInit}
		=\hat{s},  \;\ 
		\scalebox{.7}[1.0]{\bf AdapWt} = \{\hat{\beta}_{1},  
		\hat{\beta}_{2}, 
		\hat{\beta}_{3}\}, \;\; 
		\scalebox{.7}[1.0]{\bf Data}=\{T, {\bf m}\},  \;\;
		\lambda \Big).
	\end{align}
	Note that the reconstruction problems of equations  (\ref{eq:reccons})  
	and  
	(\ref{eq:recuc}) and nearly
	identical with a minor notational difference:  the minimization of 
	equation  
	(\ref{eq:reccons})  searches  for
	image $s$ of size $N\times N$ with a constraint that  $s=E^{(j)}\bar{s}$  
	where  $\bar{s}$ is an image of size
	$\frac{N}{2^j}\times \frac{N}{2^j}$.  On the other hand, the minimization 
	of 
	equation  (\ref{eq:recuc})
	directly search for the reduced image, $\bar{s}$.
	%loop as
	%\begin{align}
	%	\label{eq:recph12mrloopx}
	%	s^{(j)} =   &\mathlarger{{\cal M}_1^{(j)}}\Big(
	%	\scalebox{.7}[1.0]{\bf optInit}
	%	={s}^{(j+1)},  \;\ 
	%	\scalebox{.7}[1.0]{\bf AdaptiveWeights} = \\
	%	&\mathlarger{{\cal M}_2}\Big(
	%	\scalebox{.7}[1.0]{\bf AdaptGuide}= {s}^{(j+1)},   \tau \Big),
	%	\;\;  
	%	\scalebox{.7}[1.0]{\bf Data}=\{T, {\bf m}\},  \;\;
	%	\lambda \Big)
	%\end{align}
	%Here, equation (\ref{eq:recph12mrloop})  represent the minimization of 
	%equation  (\ref{eq:reccons}).
	%Let us use the notation $\mathlarger{\bar{\cal M}_1^{(j)}}$ to represent 
	%the 
	%minimization of equation
	% (\ref{eq:recuc}).    
	The multiresolution loop of equation  (\ref{eq:recph12mrloop})  can be 
	re-expressed using the modified operator  $\mathlarger{{\cal 
	M}_{ad}}^{(j)}\big(\cdot\big)$
	as given below:
	\begin{equation}
		\label{eq:recph12mrloopred}
		\bar{s}^{(j)} =   \mathlarger{{\cal M}_{ad}^{(j)}}\Big(
		\scalebox{.7}[1.0]{\bf optInit}
		=E^{(1)}{s}^{(j+1)},  \;\ 
		\scalebox{.7}[1.0]{\bf AdaptiveWeights} = 
		\mathlarger{{\cal M}_w}\Big(
		\scalebox{.7}[1.0]{\bf AdaptGuide}= E^{(j+1)}{s}^{(j+1)},   \tau \Big),
		\;\;  
		\scalebox{.7}[1.0]{\bf Data}=\{T, {\bf m}\},  \;\;
		\lambda \Big)
	\end{equation}
	Since the size of the minimization variable changes with the index $j$ in modified scheme,  
	i.e.,  since the
	size of the variable is
	$\frac{N}{2^j}\times \frac{N}{2^j}$, we have two changes in the 
	multiresolution loop
	given above:
	first, the result of the previous reconstruction in 
	the loop,  ${s}^{(j+1)}$
	goes through  the $2$-fold interpolation to generate initialization for 
	numerical optimization; second,  to generate adaptation 
	guide of size $N\times N$,
	the result of previous reconstruction  goes through a   $2^{j+1}$-fold 
	interpolation. 
	
	The goal in the remainder of this section is to develop an algorithm for
	the minimization problem of equation  \eqref{eq:recuc}.  We  use the 
	well-known 
	framework called  the Alternating Direction Method of Multiplier (ADMM).
	Although the ADMM framework is well-known and frequently used for
	image reconstruction problems,  developing an algorithm for
	the minimization problem of equation  \eqref{eq:recuc} using this framework
	is not a routine application and required some innovations.  Our innovations
	are mainly in Sections \ref{sec:consform}  and \ref{sec:zprob}.
	
	\subsection{Transformation of adaptive weights and constrained formulation}
	\label{sec:consform}

	The first step in developing an ADMM algorithm is to re-express the 
	optimization
	problem by introducing auxiliary variables related to the main variable 
	$\bar{s}$
	through a  linear equation such that each sub-cost in the problem of
	equation  \eqref{eq:recuc}  can be minimized independently.  This is called 
	variable splitting.  For our problem, if we extend the splitting scheme 
	that 
	has been used before
	for image reconstruction in a straightforward way, we will get the following
	modified problem:
	\begin{equation}
		\bar{s}^{(j)}   = \underset{\bar{s},x,\y,\z}{\operatorname{argmin}} \;
		F^{(j)}(\bar{s},{\bf m}, T) + 
		\lambda \langle \gamma, \|\y\|_{.2} \rangle
		+  \lambda\innerprod{(1-\gamma)\zeta }{|{\pmb \Lambda}_1 (\z)|} +
		\lambda \innerprod{(1-\gamma)(1-\zeta)}{|{\pmb \Lambda}_2 (\z)|}  + 
		{\cal B}(x)
	\end{equation}
	subject to  $E^{(j)}\bar{s}=x$,  ${\g}*(E^{(j)}\bar{s})=\y$, and
	$\h*(E^{(j)}\bar{s})=\z$.
	However, we observed experimentally that the algorithm derived from the 
	above 
	splitting leads to artifacts in the reconstruction.  We propose an 
	alternative
	variable splitting scheme, which is one of the main contributions of this 
	section.

	Based on our experimental observations,  we present a transformation for 
	the 
	adaptive
	weights  $\{\hat{\beta}_{1},  \hat{\beta}_{2}, \hat{\beta}_{3}\}$  that 
	leads 
	to faster  completion 
	of numerical optimization.  
	Define
	\begin{align}
		\zeta  & = \frac{\hat{\beta}_2}{\hat{\beta}_2+\hat{\beta}_3}\\
		\gamma & =  1 - \hat{\beta}_2 - \hat{\beta}_3
	\end{align}
	With these,  $\hat{\beta}_i$'s can be expressed
	as  
	\begin{align}
		\hat{\beta}_1 & = \gamma \\
		\hat{\beta}_2 &  = (1-\gamma)\zeta \\
		\hat{\beta}_3 & = (1-\gamma)(1-\zeta)
	\end{align}
	Now the constraints $0\le \hat{\beta}_i \le 1$  and 
	$\hat{\beta}_1+\hat{\beta}_2+\hat{\beta}_3=1$ become
	equivalent to $0\le \gamma \le 1$ and $0\le \zeta \le 1$. 
	%Define 
	%\begin{align}
	%{\bf W}_1(\r) &  = \left[
	%\begin{array}{cc}
	%\gamma(\r)  & 0 \\
	%{\bf 0} & \gamma(\r)
	%\end{array}
	%\right] \\
	%{\bf W}_2(\r)  & =  \left[
	%\begin{array}{ccc}
	%1-\gamma(\r) &  0 & 0 \\
	%0 & 1-\gamma(\r) & 0 \\
	%0 & 0  &  1-\gamma(\r)
	%\end{array}
	%\right]
	%\end{align}
	Now consider
	\begin{equation}
		\resizebox{\hsize}{!}{
			$R^{(j)}(\bar{s},\hat{\beta}_{1},  \hat{\beta}_{2}, 
			\hat{\beta}_{3})  = \sum_{\r} \left(  
			\hat{\beta}_1(\r)\|({\bf g}*(E^{(j)}\bar{s}))({\bf r})\|_2  + 
			\hat{\beta}_2(\r) |{\pmb \Lambda}_1(({\bf 
			h}*(E^{(j)}\bar{s}))(\r))|   
			+  \hat{\beta}_3(\r) |{\pmb \Lambda}_2(({\bf 
			h}*(E^{(j)}\bar{s}))(\r))| 
			\right)$
		}
	\end{equation}
	Using notations introduced in Section \ref{sec:notcon}, the above cost can 
	be 
	written as
	\begin{equation}
		R^{(j)}(\bar{s},\hat{\beta}_{1},  \hat{\beta}_{2}, 
		\hat{\beta}_{3})   = \innerprod{\beta_1}{\pwn{{\g}*(E^{(j)}\bar{s}))}}
		+ \innerprod{\beta_2}{{\pmb \Lambda}_1(\h*(E^{(j)}\bar{s}))} +
		\innerprod{\beta_3}{{\pmb \Lambda}_2 (\h*(E^{(j)}\bar{s}))}
	\end{equation}
	This can be written in terms of the transformed adaptive weights  as
	\begin{equation}
		R^{(j)}(\bar{s},\gamma,\zeta)  = 
		\innerprod{\gamma}{\pwn{{\g}*(E^{(j)}\bar{s}))}}
		+ \innerprod{\zeta(1-\gamma)}{{\pmb \Lambda}_1(\h*(E^{(j)}\bar{s}))} +
		\innerprod{(1-\zeta)(1-\gamma)}{{\pmb \Lambda}_2 (\h*(E^{(j)}\bar{s}))}
	\end{equation}
	This again can re-written as
	\begin{equation}
		R^{(j)}(\bar{s},\gamma,\zeta)  = 
		\pwn{{\gamma}\oplus({\g}*(E^{(j)}\bar{s})))}
		+ \innerprod{\zeta }{{\pmb \Lambda}_1 ((1- 
			\gamma)\oplus(\h*(E^{(j)}\bar{s})))} +
		\innerprod{(1-\zeta)}{{\pmb \Lambda}_2 ((1- 
			\gamma)\oplus(\h*(E^{(j)}\bar{s})))}
	\end{equation}
	Now the minimization problem with updated notations can be expressed as 
	\begin{equation}
		\begin{split}
			\bar{s}^{(j)}    = &  \underset{s}{\operatorname{argmin}} \;
			F^{(j)}(\bar{s},{\bf m}, T)
			+ 
			\lambda \|{\gamma}\oplus({\g}*(E^{(j)}\bar{s})))\|_2 \\
			& +  \lambda\innerprod{\zeta }{{\pmb \Lambda}_1 ((1- 
				\gamma)\oplus(\h*(E^{(j)}\bar{s})))} +
			\lambda \innerprod{(1-\zeta)}{{\pmb \Lambda}_2 ((1- 
				\gamma)\oplus(\h*(E^{(j)}\bar{s})))}  + {\cal 
				B}(E^{(j)}\bar{s})).
		\end{split}
	\end{equation}
	The above minimization can be written as the following 
	constrained  problem, 
	\begin{equation}
		\bar{s}^{(j)}   = \underset{\bar{s},x,\y,\z}{\operatorname{argmin}} \;
		F^{(j)}(s,{\bf m}, T) + 
		\lambda \|\y\|_2
		+  \lambda\innerprod{\zeta }{{\pmb \Lambda}_1 (\z)} +
		\lambda \innerprod{(1-\zeta)}{{\pmb \Lambda}_2 (\z)}  + {\cal B}(x)
	\end{equation}
	subject to  $E^{(j)}\bar{s}=x$,  $\gamma\oplus({\g}*(E^{(j)}\bar{s}))=\y$, 
	and
	$(1- \gamma)\oplus(\h*(E^{(j)}\bar{s}))=\z$.
	
	\subsection{ADMM iterations}
	
	The constrained formulation of reconstruction problem is the key for 
	developing 
	ADMM algorithm.
	The reconstruction problem is now the minimization of following cost,
	\begin{equation}
		\label{eq:isocost}
		{\cal J}_a(\bar{s},x,\y,\z, \lambda) = F^{(j)}(\bar{s},{\bf m}, T) + 
		\lambda \|\y\|_2
		+  \lambda\innerprod{\zeta }{{\pmb \Lambda}_1 (\z)} +
		\lambda \innerprod{(1-\zeta)}{{\pmb \Lambda}_2(\z)}  + {\cal B}(x),
	\end{equation}
	subject to $E^{(j)}\bar{s}=x$,  $\gamma\oplus({\g}*(E^{(j)}\bar{s}))=\y$, 
	and
	$(1- \gamma)\oplus(\h*(E^{(j)}\bar{s}))=\z$.
	The ADMM algorithm is derived through the so-called augmented Lagrangian 
	function \cite{Bertsekas_NLP, admm_eckstein}.  To write
	augmented Lagrangian,  first define
	\begin{equation}
		\label{eq:augmentcost}
		\begin{split}
			C(\bar{s},x,\y,\z,\xh,\yh,\zh)    
			& =     \frac{c}{2} \|\gamma\oplus({\bf 
				g}*(E^{(j)}\bar{s}))-\y\|_2^2    
			+  \innerprod{\hat{\y}}{\gamma\oplus({\bf g}*(E^{(j)}\bar{s}))-\y} 
			\\
			& + c\frac{c}{2} \|(1- \gamma)\oplus({\bf 
				h}*(E^{(j)}\bar{s}))-\z\|_2^2    
			+  \innerprod{\hat{\z}}{(1- \gamma)\oplus({\bf 
			h}*(E^{(j)}\bar{s}))-\z} 
			\\
			& +  \frac{c}{2}  \|E^{(j)}\bar{s}-x\|_2^2    +  
			\innerprod{\hat{x}}{E^{(j)}\bar{s}-x},
		\end{split}
	\end{equation}
	Here $\{\xh,\yh,\zh\}$ are called the Langragian weights.
	Then the following sum is called the augmented Lagrangian:
	\begin{equation}
		L_a(\bar{s},x,\y,\z,\xh,\yh,\zh, \lambda) = {\cal J}_a(\bar{s},x,\y,\z, 
		\lambda) + C(\bar{s},x,\y,\z,\xh,\yh,\zh)    
	\end{equation}
	With this definition,   the ADMM method involves series of
	minimizations on  $L_a(\bar{s},x,\y,\z,\xh,\yh,\zh, \lambda)$, where each 
	minimization is done with respect to one of the
	variables in the set  $\{\bar{s}, x,\y,\z \}$.  Selection of the variable
	for minimization   cycles through the list  $\{\bar{s}, x,\y,\z \}$,
	and each cycle is considered as one step  of the ADMM iteration.  
	In other words, if $k$ is the iteration index,  the update from the current 
	iterate denoted by
	$\{\bar{s}^{(k)}, x^{(k)},\y^{(k)},\z^{(k)}\}$,
	to the next iterate, 
	$\{\bar{s}^{(k+1)}, x^{(k+1)},\y^{(k+1)},\z^{(k+1)}\}$ is done by the 
	following 
	set of minimizations
	and updates:
	\begin{align}
		\y^{(k+1)} &  = \underset{\y}{\operatorname{argmin}} \;
		L_a(\bar{s}^{(k)}, x^{(k)}, \y,\z^{(k)}, \xh^{(k)},\yh^{(k)},\zh^{(k)}, 
		\lambda)  \\
		\z^{(k+1)} &  = \underset{\z}{\operatorname{argmin}} \;
		L_a(\bar{s}^{(k)}, x^{(k)}, \y^{(k+1)},\z, 
		\xh^{(k)},\yh^{(k)},\zh^{(k)},
		\lambda)  \\
		x^{(k+1)} &  = \underset{x}{\operatorname{argmin}} \;
		L_a(\bar{s}^{(k)}, x, \y^{(k+1)},\z^{(k+1)}, 
		\xh^{(k)},\yh^{(k)},\zh^{(k)}, \lambda)  \\
		\bar{s}^{(k+1)} &  = \underset{\bar{s}}{\operatorname{argmin}} \;
		L_a(\bar{s}, x^{(k+1)}, \y^{(k+1)},\z^{(k+1)}, 
		\xh^{(k)},\yh^{(k)},\zh^{(k)},    \lambda)  \\
		\yh^{(k+1)}  &  = \yh^{(k)} +  c(\gamma\oplus({\bf 
			g}*(E^{(j)}\bar{s}^{(k+1)})) - \y^{(k+1)})   \\
		\zh^{(k+1)}  &  = \zh^{(k)} +  c((1- \gamma)\oplus({\bf 
			h}*(E^{(j)}\bar{s}^{(k+1)})) - \z^{(k+1)})   \\
		\xh^{(k+1)}  &  = \xh^{(k)} +  c(E^{(j)}\bar{s}^{(k+1)} - x^{(k+1)})   
	\end{align}
	The solution to the first and third minimizations are well-known from the 
	literature
	and they can be obtained in a single step  using exact formulas.  We show 
	that 
	the solution to the second problem can be obtained in a single step and we 
	derive
	the exact formula.  Finally, the fourth minimization problem is a quadratic 
	problem
	that can be solved iteratively using the conjugate gradient method.  We 
	will derive
	the computational expression for implementing the conjugate gradient 
	iterations.
	
	\subsubsection{First minimization: the $\y$-problem}
	
	We first rewrite the minimization problem by showing the specific terms in 
	$L_a(\bar{s}^{(k)}, x^{(k)}, \y,\z^{(k)}, \xh^{(k)},\yh^{(k)},\zh^{(k)}, 
	\lambda)$ that
	depend on $\y$:
	\begin{equation}
		\y^{(k+1)}   = \underset{\y}{\operatorname{argmin}} \;\;
		\lambda \|\y\|_2 + \frac{c}{2} \|\gamma\oplus({\bf 
			g}*(E^{(j)}\bar{s}^{(k)}))-\y\|_2^2 
		+  \innerprod{\hat{\y}}{\gamma\oplus({\bf g}*(E^{(j)}\bar{s}^{(k)}))-\y}
	\end{equation}
	The above problem is equivalent to the following problem, in which,  the 
	cost 
	differs
	only by a constant independent of $\y$:
	\begin{equation}
		\y^{(k+1)}   = \underset{\y}{\operatorname{argmin}} \;\; 
		\lambda \|\y\|_2     + \frac{c}{2} \|\bar{\y}^{(k)} -\y\|_2^2 
	\end{equation}
	where
	\begin{equation}
		\bar{\y}^{(k)}  = |\gamma\oplus({\bf g}*(E^{(j)}\bar{s}^{(k)}))  + 
		\frac{1}{c}\hat{\y}^{(k)}
	\end{equation}
	The above minimization problem separable across pixel indices. 
	Hence it can be written as a minimization problem in $\mathbb{R}^2$ as
	\begin{equation}
		\y^{(k+1)} (\r)   = \underset{\y(\r)}{\operatorname{argmin}} \;\;
		\lambda  \|\y(\r)\|_2 + \frac{c}{2} \|\bar{\y}^{(k)}(\r) -\y(\r)\|_2^2 
	\end{equation}
	The solution is well-known
	and is given by \cite{proximal_parikh}
	\begin{equation}
		\y^{(k+1)}(\r) = max(0, 
		\|\bar{\y}^{(k)}(\r)\|-\lambda/c)\bar{\y}^{(k)}(\r).
	\end{equation}
	
	\subsubsection{Second minimization: the $\z$-problem}
	\label{sec:zprob}

	In the cost of equation \eqref{eq:isocost}, if $\zeta$ has a value $0.5$ 
	everywhere
	independent of pixel location, then the $\z$-problem can be solved
	using the solution derived by Lefkimmiatis et al \cite{HS_13}.  However, 
	even if $\zeta$  does not satisfy this condition,  their solution can be
	extended by using the same techniques used by the them, which will be
	the main focus of this sub-section.
	
	Similar to how we handled   the first problem, we write the second problem 
	by showing the specific terms in 
	$L_a(\bar{s}^{(k)}, x^{(k)}, \y^{(k+1)},\z, \xh^{(k)},\yh^{(k)},\zh^{(k)}, 
	\lambda)$ that
	depend on $\z$:
	\begin{equation}
		\resizebox{\hsize}{!}{
			$\z^{(k+1)}   = \underset{\z}{\operatorname{argmin}} \;
			\lambda \langle \zeta,   |{\pmb \Lambda}_1({\bf z)|}\rangle +
			\lambda  \langle 1-\zeta,   |{\pmb \Lambda}_2({\bf z)|}\rangle
			+ \frac{c}{2} \|(1- \gamma)\oplus({\bf 
				h}*(E^{(j)}\bar{s}^{(k)}))-\z\|_2^2 
			+  \innerprod{\hat{\z}^{(k)}}{(1- \gamma)\oplus({\bf 
					h}*(E^{(j)}\bar{s}^{(k)}))-\z}$
		}
	\end{equation}
	The equivalent minimization problem that is in the format suitable
	for deriving the closed-form solution is given by
	\begin{equation}
		\z^{(k+1)}   = \underset{\z}{\operatorname{argmin}} \;\; 
		\lambda \langle \zeta,   |{\pmb \Lambda}_1({\bf z)|}\rangle 
		+ \lambda  \langle 1-\zeta,   |{\pmb \Lambda}_2({\bf z)|}\rangle
		+ \frac{c}{2} \|\bar{\z}^{(k)} -\z\|_2^2 
	\end{equation}
	where
	\begin{equation}
		\bar{\z}^{(k)}  = (1- \gamma)\oplus({\bf h}*(E^{(j)}\bar{s}^{(k)}))  + 
		\frac{1}{c}\hat{\z}^{(k)}
	\end{equation}
	Here too,  the above minimization problem is separable across pixel 
	indices. 
	Hence it can be written as a minimization problem in $\mathbb{R}^3$ as
	\begin{equation}
		\z^{(k+1)}(\r)   = \underset{\z(\r)}{\operatorname{argmin}} \;\; 
		\lambda  \zeta(\r) |{\pmb \Lambda}_1({\bf z(r)})|
		+ \lambda  (1-\zeta(\r)) |{\pmb \Lambda}_2({\bf z(r)})|
		+ \frac{c}{2} \|\bar{\z}^{(k)}(\r) -\z(\r)\|_2^2
	\end{equation}
	As mentioned before, the solution to the above problem is known for the 
	case when  $\zeta(\r) =0.5$ 
	\cite{proximal_parikh},
	and it is not known in general.  We give the solution in the following 
	proposition.
	
	\begin{prop}
		The solution to the above problem is given by
		$$\z^{(k+1)}(\r)  = \mathlarger{\cal E}\left(
		{\cal T}\big({\pmb \Lambda}_1(\bar{\bf z}^{(k+1)}(\r)), 
		\lambda\zeta(\r)/c\big),\;
		{\cal T}\big({\pmb \Lambda}_2(\bar{\bf 
			z}^{(k+1)}(\r)),\lambda(1-\zeta(\r))/c\big),\;
		\Theta(\bar{\bf z}^{(k+1)}(\r))\right),$$
		where ${\cal T}(x,y)$  denotes soft-thresholding applied on $x$ with 
		threshold $y$, where ${\cal E}(\cdot)$ is the operator  defined in 
		Section 
		\ref{sec:notcon}.
	\end{prop}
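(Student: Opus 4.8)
\noindent The plan is to adapt the technique of Lefkimmiatis et al.~\cite{HS_13}, which settles the equal-weight case $\zeta(\r)\equiv\tfrac12$ by means of von Neumann's trace inequality, to the unequal-weight situation here. Fix a pixel $\r$, drop it from the notation, and write $\z$ for the $2\times2$ symmetric-matrix variable, $\bar{\z}$ for $\bar{\z}^{(k)}(\r)$, and $a=\lambda\zeta(\r)/c\ge0$, $b=\lambda(1-\zeta(\r))/c\ge0$; after dividing through by $c$ and discarding constants, the quantity to be minimized is $\Phi(\z)=a\,|{\pmb\Lambda}_1(\z)|+b\,|{\pmb\Lambda}_2(\z)|+\tfrac12\|\z-\bar{\z}\|_F^2$. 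Diagonalize $\bar{\z}=\bar U\,\mathrm{diag}(\bar\mu_1,\bar\mu_2)\,\bar U^{T}$ with $\bar\mu_1={\pmb\Lambda}_1(\bar{\z})\ge\bar\mu_2={\pmb\Lambda}_2(\bar{\z})$ and $\bar U$ orthogonal, its first column making angle $\Theta(\bar{\z})$ with the $x$-axis.

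The first step is to pin down the eigenvectors of the minimizer. For an arbitrary symmetric $\z=U\,\mathrm{diag}(\mu_1,\mu_2)\,U^{T}$, expand $\|\z-\bar{\z}\|_F^2=\|\z\|_F^2-2\langle\z,\bar{\z}\rangle+\|\bar{\z}\|_F^2$, and note that $\|\z\|_F^2=\mu_1^2+\mu_2^2$ and both $|{\pmb\Lambda}_i(\z)|$ depend only on the eigenvalue pair of $\z$, not on $U$. By von Neumann's (Ky Fan) trace inequality, $\langle\z,\bar{\z}\rangle\le\nu_1\bar\mu_1+\nu_2\bar\mu_2$, where $(\nu_1,\nu_2)$ is the decreasing rearrangement of $(\mu_1,\mu_2)$, with equality precisely when $\z$ and $\bar{\z}$ are simultaneously diagonalizable with eigenvalues listed in the same (decreasing) order --- i.e.\ when $U$ can be taken equal to $\bar U$, so $\Theta(\z)=\Theta(\bar{\z})$. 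Since every remaining term of $\Phi$ depends on $\z$ only through its eigenvalue pair, for each fixed eigenvalue pair $\Phi$ is minimized (strictly decreased unless $\bar\mu_1=\bar\mu_2$) by this alignment; hence the minimizer $\z^{*}$ satisfies $\Theta(\z^{*})=\Theta(\bar{\z})$, and it remains only to choose its eigenvalues. Labelling by $\Lambda_1\ge\Lambda_2$ the eigenvalues of $\z$ attached, respectively, to the top and bottom eigenvectors of $\bar{\z}$ (as forced by the equality condition), the cost reduces to the separable convex program $\min_{\Lambda_1\ge\Lambda_2}\big[a|\Lambda_1|+\tfrac12(\Lambda_1-\bar\mu_1)^2\big]+\big[b|\Lambda_2|+\tfrac12(\Lambda_2-\bar\mu_2)^2\big]$, whose unconstrained minimizers are the scalar soft-thresholds $\Lambda_1^{*}={\cal T}(\bar\mu_1,a)$ and $\Lambda_2^{*}={\cal T}(\bar\mu_2,b)$; restoring the factor $c$ and applying ${\cal E}$ gives exactly the claimed expression.

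The step I expect to be the genuine obstacle is the ordering constraint $\Lambda_1\ge\Lambda_2$: soft-thresholding the two eigenvalues at the \emph{different} levels $a$ and $b$ need not preserve their order (a large $a$ can push ${\cal T}(\bar\mu_1,a)$ below ${\cal T}(\bar\mu_2,b)$), so one must verify that ${\cal T}(\bar\mu_1,a)\ge{\cal T}(\bar\mu_2,b)$ holds under the conventions in force --- I would derive this from the way the weights $\zeta(\r)$ are produced by ${\cal M}_w$ --- or else treat the boundary case separately, where by convexity the constrained minimizer collapses to $\Lambda_1^{*}=\Lambda_2^{*}={\cal T}\big(\tfrac{\bar\mu_1+\bar\mu_2}{2},\tfrac{a+b}{2}\big)$. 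Two minor loose ends also need tidying: the non-uniqueness of $\bar U$ when $\bar\mu_1=\bar\mu_2$ (harmless, since $\Phi$ is continuous and the reduced eigenvalue problem is unchanged), and the $(1-\gamma)$-weighted back-substitution into the ADMM $\z$-update (routine). Finally, it is worth noting that $\Phi$ is \emph{not} convex in $\z$ when $\zeta(\r)\ne\tfrac12$, since $a|{\pmb\Lambda}_1(\cdot)|+b|{\pmb\Lambda}_2(\cdot)|$ fails to be a convex function of a symmetric matrix for $a\ne b$; this is why the global trace-inequality argument is essential and a first-order optimality calculation would not suffice.
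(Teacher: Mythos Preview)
Your approach is essentially the same as the paper's: reparameterize the symmetric matrix by its eigenvalues and eigenvector angle, invoke the Lefkimmiatis et al.\ argument (i.e., the trace/von Neumann inequality) to conclude $\theta^{*}=\Theta(\bar{\z})$, observe that the Frobenius term then collapses to $\tfrac12(\eta_1-\bar\eta_1)^2+\tfrac12(\eta_2-\bar\eta_2)^2$, and read off two independent scalar soft-thresholding problems. The paper's proof does exactly this, only more tersely and without your explicit invocation of von Neumann.

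Where you go beyond the paper is in flagging the ordering constraint $\Lambda_1\ge\Lambda_2$ and the non-convexity for $\zeta(\r)\ne\tfrac12$. The paper's proof simply ignores both: it treats $(\eta_1,\eta_2,\theta)$ as free, unordered parameters and declares the resulting separable problem ``well-known.'' Your worry is legitimate---soft-thresholding at unequal levels $a\ne b$ can indeed invert the order of $\bar\mu_1\ge\bar\mu_2$, and when that happens the reparameterized optimum no longer corresponds to the stated $\Lambda_1,\Lambda_2$ labeling, so the formula as written need not be the minimizer of the original cost. The paper offers no argument excluding this case (nor any appeal to how ${\cal M}_w$ produces $\zeta$), so what you identify as ``the genuine obstacle'' is in fact a gap in the paper's own proof, not something you are missing. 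Your proposed remedies (checking that the weights produced upstream prevent order reversal, or handling the boundary by the equal-eigenvalue collapse) are reasonable directions; the paper supplies neither.
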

	\begin{proof}   
		Let
		$\eta_1={\pmb \Lambda}_1({\bf z}(\r))$,  $\eta_2={\pmb \Lambda}_2({\bf 
			z}(\r))$,
		$\bar{\eta}_1={\pmb \Lambda}_1(\bar{\bf z}^{(k+1)}(\r))$,  and 
		$\bar{\eta}_2={\pmb \Lambda}_2(\bar{\bf z}^{(k+1)}(\r))$,
		Further,  let $\theta = \Theta ({\bf z}(r))$,   $\bar{\theta} = \Theta 
		(\bar{\bf z}^{(k+1)}(r))$, where $\Theta(\cdot)$ is the operator 
		defined in
		Section \ref{sec:notcon}.
		Then the minimization problem can be written as
		\begin{equation}
			(\eta^{*}_1, \eta^{*}_2, \theta^*) = \underset{\eta_1, \eta_2, 
			\theta 
			}{\operatorname{argmin}} \;\; 
			c_1|\eta_1| + c_2 |\eta_2|  + \frac{1}{2} \|{\cal E}(\eta_1, 
			\eta_2, 
			\theta)-
			{\cal E}(\bar{\eta}_1, \bar{\eta}_2, \bar{\theta})\|_2^2,
		\end{equation}
		where $c_1 = \frac{1}{c}\lambda  \zeta(\r)$ and $c_2 = 
		\frac{1}{c}\lambda 
		(1- \zeta(\r))$. By following the idea
		of Lefkimmiatis et al \cite{HS_13},   we can conclude that the term  
		$\frac{1}{2} \|{\cal E}(\eta_1, \eta_2, \theta)-
		{\cal E}(\bar{\eta}_1, \bar{\eta}_2, \bar{\theta})\|_2^2$ attains the 
		minimum 
		with respect to $\theta$ at $\bar{\theta}$,
		i.e.,  $\theta^*=\bar{\theta}$.  At the minimum, 
		this part of the cost becomes equal to 
		$\frac{1}{2}(\eta_1-\bar{\eta}_1)^2 
		+ \frac{1}{2}(\eta_2-\bar{\eta}_2)^2$.
		Hence the simplified minimization problem becomes the following:
		\begin{equation}
			(\eta^{*}_1, \eta^{*}_2) = \underset{\eta_1, \eta_2 
			}{\operatorname{argmin}} \;\; 
			c_1|\eta_1| + c_2 |\eta_2|  + \frac{1}{2}(\eta_1-\bar{\eta}_1)^2 + 
			\frac{1}{2}(\eta_2-\bar{\eta}_2)^2,
		\end{equation}
		The solution is well-known and is given by \cite{proximal_parikh}    
		$\eta^{*}_1 = {\cal T}\big(\bar{\eta}_1, c_1\big)$  and 
		$\eta^{*}_2 = {\cal T}\big(\bar{\eta}_2, c_2\big)$.  With this,  the 
		required solution can be expressed
		as
		$\z^{(k+1)}(\r) = {\cal E}(n_1^*,n_2^*,\theta^*)= 
		{\cal E}( {\cal T}\big(\bar{\eta}_1, c_1\big), {\cal 
		T}\big(\bar{\eta}_1, 
		c_1\big),\bar{\theta})$.
		Substituting back for $\bar{\eta}_1$, $\bar{\eta}_2$, $\bar{\theta}$, 
		$c_1$ 
		and $c_2$  gives
		final expression.
	\end{proof}
	
	\subsubsection{Third minimization: the $x$-problem}
	
	As before,   we write the second problem  by showing the specific terms in 
	$L_a(\bar{s}^{(k)}, x, \y^{(k+1)},\z^{(k+1)}, 
	\xh^{(k)},\yh^{(k)},\zh^{(k)}, 
	\lambda)$ that
	depend on $x$:
	\begin{equation}
		x^{(k+1)}   = \underset{x}{\operatorname{argmin}} \;
		{\cal B}(x)  + \frac{c}{2} \|E^{(j)}\bar{s}^{(k)}-x\|_2^2 
		+  \innerprod{\hat{x}}{E^{(j)}\bar{s}^{(k)}-x}
	\end{equation}
	An equivalent problem is
	\begin{equation}
		x^{(k+1)}   = \underset{x}{\operatorname{argmin}} \;
		{\cal B}(x)  + \frac{c}{2} \|\bar{x}^{(k)}-x\|_2^2,
	\end{equation}
	where
	\begin{equation}
		\bar{x}^{(k)} = E^{(j)}\bar{s}^{(k)} + \frac{1}{c} \bar{x}^{(k)}
	\end{equation}
	Again, this problem is separable across pixels,  and can be written as
	\begin{equation}
		x^{(k+1)}(\r)   = \underset{x(\r)}{\operatorname{argmin}} \;
		{\cal B}(x(\r))  + \frac{c}{2} \|\bar{x}^{(k)}(\r)-x(\r)\|_2^2 
	\end{equation}
	The solution to this  problem is well known and it is given
	by $x^{(k+1)}(\r)  = {\cal P}(\bar{x}^{(k)}(\r))$, where ${\cal P}(\cdot)$
	denotes the clipping its argument to the range within which 
	${\cal B}(x(\r))$ zero.
	
	\subsubsection{Fourth  minimization: the  $s$-problem}
	
	The cost
	$L_a(\bar{s}, x^{(k+1)}, \y^{(k+1)},\z^{(k+1)}, 
	\xh^{(k)},\yh^{(k)},\zh^{(k)}, 
	\lambda)$
	with respect to $\bar{s}$ is quadratic,  and the minimum is obtained by 
	equating
	the gradient of $L_a(\bar{s}, x^{(k+1)}, \y^{(k+1)},\z^{(k+1)}, 
	\xh^{(k)},\yh^{(k)},\zh^{(k)}, \lambda)$
	with respect $s$ to zero.  The equation is given by
	\begin{equation}
		\begin{split}
			& E^{T{(j)}}\left(
			\g^T*\left(\gamma^2\oplus({\bf g}*(E^{(j)}\bar{s}))\right)  
			\;\;+\;\;
			\h^T*\left((1- \gamma)^2\oplus({\bf h}*(E^{(j)}\bar{s}))\right)
			\;\; + \;\;  b^T*b*(E^{(j)}\bar{s}))  \;\;+ \;\; 
			E^{(j)}\bar{s}
			\right) \\
			& \;\;\;\;\;\;\; =  
			E^{T{(j)}}\left(
			\g^T*\left(\gamma^2\oplus\tilde{\y}^{(k+1)} \right)  \;\;+\;\;
			\h^T*\left((1- \gamma)^2\oplus\tilde{\z}^{(k+1)}\right)
			\;\; + \;\; \tilde{x}^{(k+1)}  \;\;+ \;\; b^T*m
			\right),
		\end{split}
	\end{equation}
	where
	\begin{align}
		\tilde{\y}^{(k+1)} & = \left(  \y^{(k+1)}-\frac{1}{c}\hat{\y}^{(k+1)}  
		\right) \\
		\tilde{\z}^{(k+1)} & = \left(  \z^{(k+1)}-\frac{1}{c}\hat{\z}^{(k+1)}  
		\right) \\
		\tilde{x}^{(k+1)} & = \left(  x^{(k+1)}-\frac{1}{c}\hat{x}^{(k+1)}  
		\right).
	\end{align}
	Here $m$ denotes the  inverse DFT of the 
	image obtained by inserting measured samples in a 2D array of zeros.
	Further,  $b$ denotes inverse DFT of  the binary image containing  ones
	at locations where Fourier samples were measured and zeros at other 
	locations.  The above equation can be solved by the method of
	conjugate gradients. 
	
	\section{Computation of spatially adaptive weights}
	\label{sec:wtcomp}
	
	Here we consider the problem of determining the adaptive weights with the
	guide image $\hat{s}(\r)$ given, which is represented by
	\begin{equation}
		%	\label{eq:recph2}
		(\hat{\beta}_1,  \hat{\beta}_2, \hat{\beta}_3) =\mathlarger{{\cal 
		M}_w}\Big(
		\scalebox{.7}[1.0]{\bf AdaptGuide}=\hat{s},   \tau \Big)
	\end{equation}
	Let $Y_{1}({\bf r},\hat{s}) =  \|({\bf g}*\hat{s})({\bf r}) \|_2$,
	$Y_{2}({\bf r},\hat{s}) =   |{\pmb \Lambda}_{1}(({\bf h}*\hat{s})({\bf 
			r}))|$,  
		and
		$Y_{3}({\bf r},\hat{s}) =   |{\pmb \Lambda}_{2}(({\bf h}*\hat{s})({\bf 
		r}))|$,
	Now the minimization problem can be written as
	\begin{equation}
		\label{eq:betaprob}
		\{\hat{\beta}_1, \hat{\beta}_2,\hat{\beta}_3\} =  
		\underset{{\beta}_1, {\beta}_2,{\beta}_3}
		{\operatorname{argmin}}
		\sum_{i=1}^3\beta_i({\bf r}) Y_{i}({\bf r},\hat{s}) -
		\tau \sum_{\bf r} \log(\beta_1({\bf r})\beta_2({\bf r})\beta_3({\bf 
		r})) 
	\end{equation}  
	subject to   $$0\le {\beta}_i(\r) \le 1 \;\; \mbox{and} 
	\sum_{i=1}^3{\beta}_i({\bf r}) = 1.$$
	The solution to this problem is given in the following proposition.
	\begin{prop}
		The solution is given by   $\hat{\beta}_i(\r) = \frac{\tau}{Y_{i}({\bf 
				r},\hat{s}) + \varphi({\bf r})}$  
		where $\varphi({\bf r})$ is such that   
		$0\le \hat{\beta}_i(\r) \;\; \mbox{and} \sum_{i=1}^3\hat{\beta}_i({\bf 
		r}) 
		= 1$.
	\end{prop}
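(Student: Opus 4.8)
The plan is to exploit the fact that the objective in \eqref{eq:betaprob} and all of its constraints decouple completely across the pixel index $\r$: the cost is a sum over $\r$ of terms each involving only $\{\beta_1(\r),\beta_2(\r),\beta_3(\r)\}$, and the constraints $0\le\beta_i(\r)\le 1$, $\sum_i\beta_i(\r)=1$ likewise involve a single $\r$. Hence it suffices to solve, for each fixed $\r$, the three-variable problem
\[
\min_{\beta_1,\beta_2,\beta_3}\ \phi(\beta_1,\beta_2,\beta_3):=\sum_{i=1}^3\beta_i Y_i-\tau\sum_{i=1}^3\log\beta_i
\quad\text{s.t. }\ \textstyle\sum_i\beta_i=1,\ 0\le\beta_i\le 1,
\]
where I abbreviate $Y_i:=Y_i(\r,\hat{s})\ge 0$, and then reassemble the per-pixel solutions.

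First I would record the structural properties of $\phi$. On the open positive orthant $\phi$ is smooth and strictly convex (each $-\tau\log\beta_i$ is strictly convex and the remaining terms are linear), and $\phi(\beta)\to+\infty$ whenever any coordinate $\beta_i\to 0^{+}$. Since the feasible closed simplex is compact, a minimizer exists; the blow-up at the boundary forces it into the relative interior, i.e.\ all $\hat\beta_i>0$, and the strict convexity makes it unique. Moreover, when all $\beta_i>0$ and $\sum_j\beta_j=1$ one has $\beta_i=1-\sum_{j\ne i}\beta_j<1$, so the upper bounds $\beta_i\le 1$ and the lower bounds $\beta_i\ge 0$ are inactive at the minimizer. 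Consequently the minimizer is characterized by the plain Lagrange stationarity condition for the single equality constraint $\sum_i\beta_i=1$, with no inequality multipliers needed.

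Next I would introduce a multiplier $\varphi=\varphi(\r)$ and write stationarity: $\partial\phi/\partial\beta_i+\varphi=Y_i-\tau/\beta_i+\varphi=0$, which yields $\hat\beta_i(\r)=\tau/(Y_i+\varphi)$. Imposing $\sum_i\hat\beta_i(\r)=1$ gives the scalar equation $g(\varphi):=\sum_{i=1}^3\tau/(Y_i+\varphi)=1$. I would then check this has a unique admissible root: on $\varphi\in(-\min_i Y_i,\ \infty)$, where every denominator is positive (hence every candidate $\hat\beta_i$ is positive), $g$ is continuous and strictly decreasing, with $g(\varphi)\to+\infty$ as $\varphi\downarrow-\min_i Y_i$ and $g(\varphi)\to 0$ as $\varphi\to+\infty$; the intermediate value theorem then produces a unique $\varphi(\r)$ with $g(\varphi(\r))=1$. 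The resulting $\hat\beta_i(\r)=\tau/(Y_i(\r,\hat{s})+\varphi(\r))$ are positive, sum to one, and (as noted above) automatically satisfy $\hat\beta_i(\r)\le 1$; being the unique stationary point of a strictly convex function over the feasible affine set, they form the unique global minimizer. Collecting these solutions over all $\r$ gives exactly the claimed formula.

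The main obstacle is not computational but conceptual bookkeeping: one must justify that the logarithmic barrier renders the box constraints $0\le\beta_i(\r)\le 1$ inactive --- so that the ordinary Lagrange stationarity system, rather than the full KKT system with inequality multipliers, applies --- and that the scalar equation $g(\varphi(\r))=1$ is solvable precisely in the range of $\varphi$ that keeps all three denominators strictly positive (in particular covering the degenerate cases where some $Y_i(\r,\hat{s})$ vanishes, e.g.\ in locally smooth regions of the guide image). Once these points are in place, the remainder is the standard convex-duality calculation sketched above.
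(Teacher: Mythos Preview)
Your proposal is correct and follows essentially the same approach as the paper: reduce to a per-pixel problem and apply Lagrange multipliers for the equality constraint, obtaining $\hat\beta_i=\tau/(Y_i+\varphi)$ with $\varphi$ determined by the sum-to-one condition. You are in fact more careful than the paper, which simply ``ignore[s] the bound constraint'' without justification and does not argue existence or uniqueness of $\varphi$; your convexity/barrier argument for inactivity of the box constraints and your monotonicity/IVT argument for the scalar equation fill these gaps cleanly.
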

	\begin{proof} 
		Since the minimization problem is independent of the pixel index
		${\bf r}$, we simplify the notation and  write the cost to be
		minimized as  $A(\beta_1, \beta_2, \beta_3) =
		\sum_{i=1}^n \beta_iy_i - \tau \log(\beta_1\beta_2\beta_3)$.  As a 
		first 
		step towards obtaining
		the solution, we  ignore the bound constraint and  apply Langrange 
		multiplier theory to 
		handle the equality constraint \cite{Bertsekas_NLP}.  The Lagrangian is 
		given 
		by 
		$A_l(\beta_1, \beta_2, \beta_3) =
		\sum_{i=1}^n \beta_iy_i - \tau \log(\beta_1\beta_2\beta_3) + \varphi
		(\sum_{i=1}^n \beta_i - 1)$, where $\varphi$ is the Lagrange's 
		multiplier.
		The minimum of the constrained optimization problem can be
		represented by the equations 
		$\frac{\partial}{\partial \beta_i}A_l(\beta_1, \beta_2, \beta_3)=0, 
		i=1,2,3$. 
		Representing the solution by $\hat{\beta}_i$ we get  
		$\hat{\beta}_i = \frac{\tau}{y_{i} + \varphi}$.  Here  $\varphi$  has 
		to be 
		determined
		such that the constraints
		$0\le {\beta}_i(\r) \le 1 \;\; \mbox{and} \sum_{i=1}^3{\beta}_i({\bf 
		r}) = 
		1$
		are satisfied.  Bringing back the dependence on pixel index ${\bf r}$  
		and 
		the 
		dependence of $Y_{i}$ on $\hat{s}$ gives the required expression for 
		$\hat{\beta}_i(\r)$.
	\end{proof}
	
	To determine  $\varphi({\bf r})$ satisfying {\bf Proposition 2},   
	we note that the sum $\sum_{i=1}^3{\beta}_i({\bf r})$ is a monotonically
	decreasing function of $\varphi({\bf r})$.   Hence $\varphi({\bf r})$
	that satisfies  $\sum_{i=1}^3{\beta}_i({\bf r})=1$  can be determined
	by binary search. To determine interval for binary search, 
	first we note that
	by imposing $0\le \hat{\beta}_i(\r), i=1,2,3$, we  have a lower bound on 
	$\varphi({\bf r})$
	that $\varphi({\bf r}) > -{\beta}_m(\r)$, where   ${\beta}_m(\r) = min  
	\{\hat{\beta}_i(\r), i=1,2,3\}$. To determine a more useful bound, we adopt 
	the 
	following strategy.  Lower bound on $\varphi({\bf r})$
	that ensures that $\sum_{i=1}^3\beta_i(\mathbf{r})>1$
	can be obtained by imposing $max(\beta_i(\mathbf{r}), i=1,2,3) >1$, which 
	leads 
	to
	the lower bound on $\varphi({\bf r})$  as  $\tau - Y_{m}({\bf r},\hat{s})$
	where  $Y_{m}({\bf r},\hat{s})=min(Y_{i}({\bf r},\hat{s}), i=1,2,3)$.
	upper bound on $\varphi({\bf r})$
	that ensures that $\sum_{i=1}^3\beta_i(\mathbf{r}) < 1$
	can be obtained by imposing $max(\beta_i(\mathbf{r}), i=1,2,3)  < 1/3$
	which leads to upper bound on 
	$\varphi({\bf r})$  as  $3\tau - Y_{m}({\bf r},\hat{s})$. 
	Using these bounds, a  binary search on $\varphi({\bf r})$ can be done
	such that $\sum_{i=1}^3{\beta}_i({\bf r})=1$.
	Since binary search 
	can 
	converge very fast,  we can assume that the minimization problem of equation
	\eqref{eq:betaprob} can be solved exactly, in the sense that convergence 
	can be
	achieved with a tolerance that is comparable with machine precision within a
	reasonable amount of time.

	\bibliographystyle{unsrt}  
	\bibliography{hcorosa}

\end{document}